\newcommand\mc[1]{\mathcal{#1}}
\newcommand\cH{\mc{H}}
\newcommand\cK{\mc{K}}
\newcommand\BH{\mc{B}(\cH)}
\newcommand\PH{\mc{P}(\cH)}
\newcommand\PV{\mc{P}(V)}
\newcommand\cN{\mc{N}}
\newcommand\cJ{\mc{J}}
\newcommand\SN{\mc{S}(\cN)}
\newcommand\PN{\mc{P}(\cN)}
\newcommand\VN{\mc{V}(\mc{N})}
\newcommand\JN{\mc{J}(\mc{N})}
\newcommand\JNsa{\mc{J}(\mc{N})_\mathrm{sa}}
\newcommand\PPi{\underline{\Pi}}
\newcommand\ra{\rightarrow}
\newcommand\hra{\hookrightarrow}
\newcommand\R{\mathbb{R}}
\newcommand\C{\mathbb{C}}
\newtheorem{theorem}{Theorem}
\newtheorem{lemma}{Lemma}
\newtheorem{definition}{Definition}
\begin{document}
%\title{Contextuality and Bell's theorem:\\How No-Signalling and Contextuality single out Correlations in Quantum Mechanics}
\title{Gleason's theorem for composite systems}
\author{Markus Frembs}
\email{m.frembs@griffith.edu.au}
\affiliation{Centre for Quantum Dynamics, Griffith University,\\ Yugambeh Country, Gold Coast, QLD 4222, Australia}
\vspace{-0.5cm}
\author{Andreas D\"oring}
\email{andreas.doering@posteo.de}
%Department of Physics, Imperial College, Prince Consort Road, SW7 2AZ, United Kingdom

\begin{abstract}
    Gleason's theorem \cite{Gleason1975} is an important result in the foundations of quantum mechanics, where it justifies the Born rule as a mathematical consequence of the quantum formalism. Formally, it presents a key insight into the projective geometry of Hilbert spaces, showing that finitely additive measures on the projection lattice $\PH$ extend to positive linear functionals on the algebra of bounded operators $\BH$. Over many years, and by the effort of various authors, the theorem has been broadened in its scope from type I to arbitrary von Neumann algebras (without type $\text{I}_2$ factors). Here, we prove a generalisation of Gleason's theorem to composite systems. To this end, we strengthen the original result in two ways: first, we extend its scope to dilations in the sense of Naimark and Stinespring \cite{Naimark1943,Stinespring1955} and second, we require consistency with respect to dynamical correspondences on the respective (local) algebras in the composition \cite{AlfsenShultz1998}. We show that neither of these conditions changes the result in the single system case, yet both are necessary to obtain a generalisation to bipartite systems.
\end{abstract}

\maketitle

%\section{Introduction.}
%\begin{itemize}
    %\item mention fin dim case, repeat proof as in \cite{KlayRandallFoulis1987,BarnumEtAl2010}
    %\item relation between \cite{KlayRandallFoulis1987,Wallach2002,BarnumEtAl2010} and \cite{BunceWright1993}?
%\end{itemize}
%--> into current arXiv version!!!

\section{Gleason's theorem}

%flag: from thesis ...
Gleason's theorem is a landmark result in quantum theory. It justifies the Born rule, which originally had the status of an axiom, as a mathematical fact of the projective geometry of Hilbert spaces.

\begin{theorem}\label{thm: original Gleason}
    %Let $\cH$ be a Hilbert space, $\mathrm{dim}(H) \geq 3$, and $\mc{B}(\mc{H})$ represent the algebra of physical quantities of some quantum system.
    \textbf{\emph{(Gleason \cite{Gleason1975})}}
    Let $\cH$ be a separable Hilbert space, $\mathrm{dim}(\cH) \geq 3$.
    %Let $\mc{H} = \mathbb{C}^n$, $n \geq 3$, and $\mc{B}(\mc{H}) = M_n(\mathbb{C})$.
    Then every countably additive probability measure $\mu: \mc{P}(\mc{H}) \rightarrow [0,1]$ over the projections on $\mc{H}$ is of the form $\mu(p) = \mathrm{tr}[\rho_\mu p]$ for all $p \in \mc{P}(\mc{H})$, with $\rho_\mu: \mc{H} \rightarrow \mc{H}$, $\rho_\mu \geq 0$, $\mathrm{tr}[\rho_\mu] = 1$ a density operator.
\end{theorem}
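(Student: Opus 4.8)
The plan is to follow Gleason's original route, turning the statement about $\mu$ into an analytic statement about \emph{frame functions} on the unit sphere $S(\cH) = \{x \in \cH : \|x\| = 1\}$, and then patching together finite-dimensional data. Given a countably additive probability measure $\mu$, define $f \colon S(\cH) \ra [0,1]$ by $f(x) = \mu(p_x)$, where $p_x$ is the rank-one projection onto $\C x$; note $f(e^{i\theta}x) = f(x)$. Countable additivity together with separability give $\sum_{i \in \N} f(e_i) = \mu(\mathbbm{1}) = 1$ for every orthonormal basis $(e_i)_{i \in \N}$, so $f$ is a nonnegative frame function of weight $1$; conversely, such an $f$ determines $\mu$ on rank-one projections, and countable additivity then pins down $\mu$ on all of $\PH$. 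Hence it suffices to prove that every nonnegative frame function of finite weight is \emph{regular}, i.e.\ of the form $f(x) = \langle x, A x \rangle$ for a positive self-adjoint operator $A$.

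\textbf{The real three-dimensional case.}
This is the analytic core. First I would show that a nonnegative frame function $f$ on $S^2 \subset \R^3$ is bounded, and then that it is \emph{continuous}: if $x, y \in S^2$ are close, one completes each to an orthonormal triple sharing a common third vector and uses the frame identity to bound the oscillation $|f(x) - f(y)|$. With continuity in hand, expand $f$ in spherical harmonics; the relation $f(x) + f(y) + f(z) = \mathrm{const}$ over orthonormal triples annihilates every harmonic component of $f$ except those of degrees $0$ and $2$, so $f$ is the restriction of a real quadratic form, $f(x) = \langle x, A x \rangle$ with $A = A^\ast$ on $\R^3$, and $A \geq 0$ because $f \geq 0$.

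\textbf{Patching, and the complex and infinite-dimensional cases.}
For a separable complex $\cH$ with $\mathrm{dim}(\cH) \geq 3$: restricting $f$ to any real three-dimensional subspace gives, by the previous step, a positive operator there, and these local operators agree on overlaps, hence assemble into a single real-bilinear form $B$ on $\cH$ with $B(x,x) = f(x)$. A consistency argument exploiting the phase invariance $f(e^{i\theta}x) = f(x)$ then promotes $B$ to a complex-linear positive self-adjoint operator $\rho_\mu$ with $f(x) = \mathrm{tr}[\rho_\mu p_x]$. Summing over an orthonormal basis and using weight $1$ shows $\rho_\mu$ is trace-class with $\mathrm{tr}[\rho_\mu] = 1$; finally, countable additivity extends $\mu(p) = \mathrm{tr}[\rho_\mu p]$ from rank-one projections to all $p \in \PH$.

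I expect the second step — and within it, establishing continuity of nonnegative frame functions on $S^2$ — to be the main obstacle; the reduction in the first step and the patching in the third are essentially bookkeeping by comparison.
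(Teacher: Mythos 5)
The paper does not prove this theorem at all --- it is quoted as a classical result with a citation to Gleason's original article, and everything in the paper builds on top of it as a black box. So there is no in-paper argument to compare against; what you have written is a reconstruction of Gleason's own 1957 proof strategy (reduction to frame functions, regularity of nonnegative frame functions on $S^2$, then complexification and passage to infinite dimensions), and that overall architecture is correct, including the spherical-harmonics step: the space of \emph{continuous} frame functions on $S^2$ is a closed rotation-invariant subspace, hence a sum of irreducible $SO(3)$-modules, and only the degree-$0$ and degree-$2$ harmonics survive the frame condition, so a continuous frame function is a quadratic form.

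The genuine gap is exactly where you suspect it, but it is worse than "the main obstacle": the continuity argument you sketch does not work. If $x,y$ are nearby unit vectors with common orthogonal complement vector $z$, completing to triples $\{x,x',z\}$ and $\{y,y',z\}$ and subtracting the two frame identities gives $f(x)-f(y)=f(y')-f(x')$, where $x'$ and $y'$ are again nearby unit vectors. This \emph{transfers} the oscillation of $f$ from a neighbourhood of $x$ to a neighbourhood of $x'$; it does not bound it. Since $f$ is a priori only a bounded nonnegative function with no measurability assumption, one cannot conclude anything from this identity alone. The actual proof that a nonnegative frame function on $S^2$ is continuous is the celebrated hard core of Gleason's paper: it proceeds through a chain of geometric lemmas (the "great circle descent" argument relating values of $f$ at the north pole to values on descending latitudes) showing first that $f$ is continuous at a point where it is close to its supremum, and only then propagating continuity everywhere; the elementary reworkings by Piron and by Cooke--Keane--Moran replace but do not shorten this step. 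A second, smaller imprecision: in the complex case you cannot restrict $f$ to "any real three-dimensional subspace" --- the restriction is a real frame function only on \emph{totally real} subspaces, i.e.\ real spans of complex-orthonormal triples, and the phase-invariance argument that upgrades the resulting real quadratic forms to a single complex sesquilinear form is a separate (if routine) lemma in Gleason's paper rather than bookkeeping. With those two points repaired, the reduction and the infinite-dimensional patching are as straightforward as you describe.
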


Here, a \emph{countably additive probability measure} is a map $\mu: \mc{P}(\mc{H}) \rightarrow [0,1]$, $\mu(1) = 1$ and such that $\mu(\sum_{i=1}^\infty p_i) = \sum_{i=1}^\infty \mu(p_i)$ for all $p_i \in \PH$ such that $p_ip_j=0$ whenever $i\neq j$. Note that when $\mc{N} = M_2(\mathbb{C})$, there exist measures that fail to extend to linear functionals.

While the original argument was for type I factors only, Thm.~\ref{thm: original Gleason} was later extended to type II and type III von Neumann algebras in \cite{Christensen1982, Yeadon1983,Yeadon1984} (see also \cite{Maeda1989}). In this setting, $\mu$ is further called completely additive if $\mu(\sum_{i \in I} p_i) = \sum_{i \in I} \mu(p_i)$ for every family of pairwise
orthogonal projections $(p_i)_{i \in I}$, $p_i \in \PN$.\footnote{For $\cN = \BH$ with $\cH$ separable, complete additivity reduces to countable additivity.} Recall that a state $\sigma \in \SN$ is a positive, normalised linear functional, it is \emph{normal} if $\sigma(\sum_{i \in I} p_i) = \sum_{i \in I} \sigma(p_i)$ for all families of pairwise orthogonal projections $(p_i)_{i \in I}$, $p_i \in \PN$ (Thm. 7.1.12 in \cite{KadisonRingroseII}).
Clearly, a (normal) state $\sigma$ defines a finitely (completely) additive probability measure on $\mc{P}(\cN)$, conversely:

\begin{theorem}\label{thm: general Gleason}
    \textbf{\emph{(Gleason-Christensen-Yeadon \cite{Christensen1982,Yeadon1983,Yeadon1984})}} Let $\cN$ be a von Neumann algebra with no summand of type $I_2$ and let $\mu: \mc{P}(\cN) \rightarrow \mathbb{R}$ be a finitely additive probability measure on the projections of $\cN$. There exists a unique state $\sigma_\mu \in \mc{S}(\cN)$ such that $\mu(p) = \sigma_\mu(p)$ for all $p \in \mc{P}(\cN)$. If $\mu$ is completely additive then $\sigma_\mu$ is normal and of the form $\mu(p) = \sigma_\mu(p) = \mathrm{tr}[\rho_\mu p]$ for all $p \in \mc{P}(\cN)$ with $\rho_\mu$ a positive trace-class operator. %$\mathrm{tr}[\rho_\mu] = 1$
\end{theorem}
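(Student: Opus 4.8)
The plan is to peel off the two soft parts --- uniqueness, and (granting existence) normality --- and to obtain the hard existence statement by reducing along the central type decomposition of $\cN$ to the classical theorems of Gleason, Christensen and Yeadon. \emph{Uniqueness:} two states agreeing on $\PN$ agree everywhere, since by the spectral theorem each self-adjoint $a \in \cN$ is a norm limit of real-linear combinations of its (mutually orthogonal, hence $\mu$-additive) spectral projections, states are norm-continuous, and $\cN = \cN_{\mathrm{sa}} + i\,\cN_{\mathrm{sa}}$. \emph{Normality:} if $\mu$ is completely additive then $\sigma_\mu\vert_{\PN}$ is completely additive, which is exactly the defining property of a normal state (Thm.~7.1.12 in \cite{KadisonRingroseII}); the structure theorem for normal states then writes $\sigma_\mu(x) = \sum_i \langle \xi_i, x\,\xi_i\rangle = \mathrm{tr}[\rho_\mu\, x]$ for $x \in \cN$, with $\rho_\mu = \sum_i \lvert\xi_i\rangle\langle\xi_i\rvert \geq 0$ of trace one.

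For existence I would decompose $\cN = \cN_{\mathrm{I}} \oplus \cN_{\mathrm{II}_1} \oplus \cN_{\mathrm{II}_\infty} \oplus \cN_{\mathrm{III}}$ by central projections $z_\bullet$ according to type (there is no $\mathrm{I}_2$ summand, by hypothesis). Since the sum is finite, finite additivity gives $\mu(p) = \sum_\bullet \mu(p z_\bullet)$ with $\bullet$ ranging over $\{\mathrm{I},\mathrm{II}_1,\mathrm{II}_\infty,\mathrm{III}\}$, so it suffices to extend each restriction $\mu_\bullet = \mu(\,\cdot\, z_\bullet)$ on $\mathcal{P}(\cN z_\bullet)$ to a positive linear functional --- the suitably rescaled sum of the extensions is then automatically a state on $\cN$ inducing $\mu$. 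On $\cN_{\mathrm I}$ one reduces further to homogeneous blocks $B(\cH_n)\,\overline{\otimes}\,\cA_n$ with $\dim\cH_n = n \in \{1\}\cup\{3,4,\dots,\infty\}$ and $\cA_n$ abelian: for $n=1$ this is the classical fact that a finitely (resp.\ countably) additive measure on the projections of an abelian von Neumann algebra integrates to a finitely additive state (resp.\ a normal state with $L^1$-density), and for $n\geq 3$ one runs Gleason's frame-function argument with coefficients in $\cA_n$ --- its core being that every non-negative frame function on the unit sphere of a real or complex space of dimension $\geq 3$ is regular, i.e.\ quadratic --- after which continuity of the matrix coefficients together with additivity on commuting projections upgrades $\mu_n$ to a bounded positive functional (cf.\ \cite{Gleason1975,Maeda1989}). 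The type $\mathrm{II}_1$ summand is handled as in \cite{Yeadon1983} via the centre-valued trace $\tau\colon\cN\ra Z(\cN)$: one first shows $\mu$ is bounded, then exploits that in type $\mathrm{II}_1$ every projection is arbitrarily divisible and Murray--von Neumann equivalence is detected by $\tau$, so as to reconstruct a positive functional. Finally the properly infinite summands follow \cite{Christensen1982,Yeadon1984}: a halving $1 = e_1 + e_2$ with $e_1 \sim e_2 \sim 1$ yields $\cN \cong M_n(\cN)$ for all $n$, supplying enough internal matrix structure to run a Gleason-type extension.

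I expect the main obstacle to be precisely the continuous cases ($\mathrm{II}_1$, $\mathrm{II}_\infty$, $\mathrm{III}$): no finite-dimensional reduction is available there and Gleason's frame-function method does not apply directly, so the technical heart is the analysis of Christensen and Yeadon --- establishing boundedness of $\mu$ and then leveraging the abundance of equivalent and divisible projections. A secondary subtlety, absent for completely additive $\mu$, is the recombination of the countably many homogeneous type $\mathrm{I}_n$ blocks when $\mu$ is only finitely additive, since mass may a priori concentrate ``at infinity'' along the central decomposition; it is absorbed by applying the abelian/frame-function techniques again to the residual measure $p \mapsto \mu(p) - \sum_n \mu(p z_n)$.
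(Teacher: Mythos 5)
First, a point of comparison: the paper does not prove this theorem at all --- it is imported wholesale from \cite{Christensen1982,Yeadon1983,Yeadon1984} (see also \cite{Maeda1989}), so your proposal can only be measured against the literature it cites. Against that yardstick, the two soft parts are handled correctly and completely: uniqueness does follow from norm-density of real spans of mutually orthogonal spectral projections in $\cN_{\mathrm{sa}}$ together with norm-continuity of states, and normality plus the trace-class representation is exactly the characterisation of normal states in Thm.~7.1.12 of \cite{KadisonRingroseII} (the same result the paper itself invokes for the easy direction of Thm.~\ref{thm: Gleason in context form}). Your architecture for existence --- central type decomposition, Gleason's frame-function argument on homogeneous type $\mathrm{I}$ blocks, Yeadon's centre-valued-trace analysis for type $\mathrm{II}_1$, halving and internal matrix units for the properly infinite summands --- is an accurate map of how the literature proof is organised, and the attributions are essentially right.

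The one place where you go beyond citation and improvise is the one place the argument breaks: the recombination of the homogeneous type $\mathrm{I}$ blocks for merely finitely additive $\mu$. The residual measure $\nu(p) = \mu(p) - \sum_n \mu(p z_n)$ is indeed positive and finitely additive, but it vanishes on every central projection $z_n$ and hence, by monotonicity, on every projection dominated by a finite sum of them; it is \emph{not} supported on an abelian algebra, and it is a measure on the very same non-abelian type $\mathrm{I}$ algebra you started with. ``Applying the abelian/frame-function techniques again'' to it is therefore circular: the recursion does not terminate, since $\nu$ restricted to each homogeneous block is identically zero and all of its mass sits ``at infinity''. The literature does not proceed by summing central components in the finitely additive case; instead one extends $\mu$ to a quasi-linear functional by spectral integration within abelian subalgebras and proves boundedness and linearity of that functional directly (this is the Mackey--Gleason problem as resolved by Bunce and Wright \cite{BunceWright1992}, building on the cited papers). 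Note also that the homogeneous blocks of a general type $\mathrm{I}$ algebra are indexed by cardinals and need not be countable in number, which undermines the countable-sum framing from the start. If the proposal is meant as a reduction to the cited theorems it is acceptable --- but then it establishes no more than the paper's own citation does; as a self-contained proof, this step must be replaced.
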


%Succinctly, every finitely additive measure bijectively corresponds to a state on $\cN$ and every completely additive measure bijectively corresponds to a normal state on $\cN$.

%For later reference we mention a further generalisation of Thm.~\ref{thm: general Gleason}, which is concerned with the codomain of probability measures on $\PN$. Rather than restricting measures to be real-valued, a Gleason-type theorem holds even for Banach space-valued measures \cite{BunceWright1992}. Let $\mc{P}(\mc{N})$ be the projection lattice of a von Neumann algebra $\mc{N}$, $X$ a Banach space, and $\mu: \mc{P}(\mc{N}) \rightarrow X$ a map such that (i) $\mu(p+q)=\mu(p)+\mu(q)$ whenever $p,q \in \PN$, $pq=0$ and (ii) $\mathrm{sup}\{||\mu(p)||: p \in \PN\} < \infty$. Then $\mu$ is said to be a \emph{finitely additive, $X$-valued measure on $\mc{P}(\mc{N})$}. Clearly, each bounded linear operator from $\mc{N}$ to $X$ restricts to a finitely additive $X$-valued measure, conversely:
%\begin{theorem}\label{thm: Bunce-Wright I}
    %\textbf{\emph{(Bunce-Wright I \cite{BunceWright1992})}} Let $\mc{N}$ be a von Neumann algebra with no direct summand of type $I_2$ and let $X$ be a Banach space. Every $X$-valued measure $\mu: \mc{P}(\mc{N}) \rightarrow X$ has a unique extension to a bounded linear operator $\phi: \mc{N} \rightarrow X$.
%\end{theorem}
%\textbf{Question:} What about J(B)W algebras in \cite{BunceWright1985}?
%flag: - see comments in/before conclusion

\section{Gleason's theorem in context}

\subsection{Partial order of contexts and probabilistic presheaf.}

Note that finite (complete) additivity %imposes constraints between commuting projections only.
only assumes that $\mu: \PN \ra [0,1]$ is \emph{quasi-linear}, i.e., linear in commuting von Neumann subalgebras. Remarkably, Thm.~\ref{thm: original Gleason} and its generalisation Thm.~\ref{thm: general Gleason} show that this already implies linearity on all of $\cN$.

We emphasise the passage from quasi-linearity to linearity as follows. Note that $\mu$ restricts to a probability measure $\mu_V$ in every commutative von Neumann subalgebra $V \subset \cN$, i.e., $\mu_V = \mu|_V$. $\mu$ thus defines a collection of probability measures $(\mu_V)_{V \subset \cN}$, one for every commutative von Neumann subalgebra, and such that whenever $\tilde{V} \subset V$ is a von Neumann subalgebra, the respective probability measures are related by restriction, $\mu_{\tilde{V}} =\mu_V|_{\tilde{V}}$ (viz. marginalisation).

From this perspective, the constraints on measures $\mu: \PN \ra [0,1]$ in Gleason's theorem arise via the inclusion relations between commutative von Neumann subalgebras. This motivates the definition of the following partially ordered set (poset).

\begin{definition}\label{def: context category}
    Let $\cN$ be a von Neumann algebra. The poset of commutative von Neumann subalgebras of $\cN$ is called the \emph{context category of $\cN$} and is denoted by $\VN$.
    %If $\cN=\BH$, we write $\VH := \mc{V}(\BH)$.
\end{definition}

The name `context category' is motivated as follows. First, note that every poset can be regarded as a category, whose objects are the elements of the poset and whose arrows are defined by $a\ra b \Leftrightarrow a\leq b$. Second, in quantum physics contextuality refers to the fact that not all observables, represented by the self-adjoint operators in $\cN$, can be measured simultaneously in an arbitrary state.\footnote{This is in contrast to classical physics, where observables are represented by elements in a (commutative) algebra of functions on a (locally) compact Hausdorff space. The Kochen-Specker theorem shows that in the noncommutative case such a description
%in terms of a space of (dispersion-free) states
is impossible \cite{KochenSpecker1967}.} Only commuting operators can be measured simultaneously, i.e., those that are contained in a commutative von Neumann subalgebra---a \emph{context} \cite{DoeringFrembs2019a}. The constraints in Gleason's theorem are therefore \emph{noncontextuality constraints}: a projection $p \in \PN$ is assigned a probability $\mu(p)$ independent of the context that $p$ lies in: $\mu_V(p) = \mu_{\tilde V}(p) = \mu(p)$ whenever $p\in \tilde{V},V$.

Next, we formalise the idea that a measure $\mu: \PN \ra [0,1]$ corresponds to a collection of probability measures over $\VN$.

\begin{definition}\label{defn: probabilistic presheaf}
    Let $\cN$ be a von Neumann algebra with context category $\VN$. The \emph{(normal) probabilistic presheaf\footnote{A presheaf $\underline{P}$ over the category $\mc{C}$ is a functor $\underline{P}: \mc{C}^\mathrm{op} \ra \mathbf{Set}$. We denote presheaves with an underscore.} $\PPi$ of $\cN$ over $\VN$} is the presheaf given
    \begin{itemize}
    	\item [(i)] on objects: for all $V\in\VN$, let
    	\begin{equation*}
    			\PPi_V:=\{\mu_V:\PV\ra [0,1] \mid \mu_V\text{ is a finitely (completely) additive probability measure}\}\; ,
    	\end{equation*}
    	\item [(ii)] on arrows: for all $\tilde{V},V\in\VN$ with $\tilde{V} \subset V$, let $\PPi(i_{\tilde{V}V}): \PPi_V \ra \PPi_{\tilde{V}}$ with $\mu_V \mapsto \mu_V|_{\tilde{V}}$, where $i_{\tilde{V} V}:\tilde{V}\hra V$ denotes the inclusion map between contexts $\tilde{V} \subset V$. 
    \end{itemize}
\end{definition}

%The idea to study presheaves over the partial order of contexts $\VN$ goes back to Chris Isham \cite{IshBut98} and has become a fruitful perspective and powerful tool over the last 20 years.
We remark that the study of presheaves over the partial order of contexts is at the heart of the topos approach to quantum theory \cite{IshamButterfieldI,IshamButterfieldII,IshamButterfieldIII,IshamButterfieldIV,IshamDoeringI,IshamDoeringII,IshamDoeringIII,IshamDoeringIV,HLS2009,HLS2010}. For an introduction, see e.g. \cite{DoeIsh11}. For the intimate relationship between contextuality (in the sense of Def.~\ref{def: context category}) and various key theorems in quantum theory, see \cite{DoeringFrembs2019a}.

\subsection{Gleason's theorem in presheaf form.}

We have seen that a finitely (completely) additive probability measure over the projections of a von Neumann algebra $\cN$ can be regarded as a collection of finitely (completely) additive probability measures over contexts $(\mu_V)_{V \in \VN}$, which satisfy the constraints $\mu_{\tilde{V}} = \mu_V|_{\tilde{V}}$ whenever $\tilde{V},V \in \VN$ with $\tilde{V} \subset V$. In terms of Def.~\ref{defn: probabilistic presheaf}, $\mu$ thus becomes a \emph{global section} of the (normal) probabilistic presheaf $\PPi$. We denote the set of global sections of $\PPi$ by $\Gamma[\PPi].$\footnote{Here, `global' refers to the fact that $\mu$ satisfies the restriction constraints in $\PPi$ over all of $\VN$. In contrast, a \emph{local section} satisfies the constrains only on a sub-poset of $\VN$.} In this terminology, we obtain the following reformulation of Gleason's theorem \cite{Doering2004,DeGroote2007,Doering2012}.

\begin{theorem}\label{thm: Gleason in context form}
    \textbf{\emph{(Gleason in presheaf form (I))}} Let $\cN$ be a von Neumann algebra with no summand of type $\text{I}_2$. There is a bijective correspondence between (normal) states on $\cN$ and global sections of the (normal) probabilistic presheaf $\PPi$ of $\cN$ over $\VN$.
\end{theorem}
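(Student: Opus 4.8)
The plan is to establish the bijection in Theorem~\ref{thm: Gleason in context form} essentially by unwinding definitions on one side and invoking Theorem~\ref{thm: general Gleason} on the other. The key observation is that both a global section $\mu \in \Gamma[\PPi]$ and a (normal) state $\sigma \in \SN$ are, a priori, two different packagings of the same underlying datum: an assignment of probabilities to projections that is ``locally consistent'' across contexts. So I would first build the two maps and then show they are mutually inverse.

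First I would show that a (normal) state $\sigma \in \SN$ determines a global section. Given $\sigma$, define $\mu_V := \sigma|_V$ for each context $V \in \VN$; this restriction is a finitely additive probability measure on $\PV$ (and completely additive/normal when $\sigma$ is normal, since orthogonal families in $\PV$ are orthogonal families in $\PN$). The compatibility condition $\mu_{\tilde V} = \mu_V|_{\tilde V}$ for $\tilde V \subset V$ is immediate from associativity of restriction, so $(\mu_V)_{V\in\VN} \in \Gamma[\PPi]$. This direction is routine.

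The substantive direction is the converse: a global section $\mu = (\mu_V)_{V\in\VN}$ yields a (normal) state. Here I would first reconstruct from $(\mu_V)_V$ a single function $\hat\mu: \PN \to [0,1]$ by setting $\hat\mu(p) := \mu_V(p)$ for any context $V$ containing $p$ — e.g.\ $V = \{p, 1-p\}''$ — and note this is well-defined precisely because of the restriction constraints: if $p \in \tilde V, V$ then both $\tilde V, V$ contain the common subcontext $\{p,1-p\}''$, and $\mu$ restricts consistently to it. Next I must check $\hat\mu$ is a finitely (completely) additive probability measure on $\PN$ in the sense of Theorem~\ref{thm: general Gleason}: normalisation is clear, and additivity on a pairwise orthogonal family $(p_i)$ follows because $\{p_i\}_i$ generates a commutative von Neumann subalgebra $V$, inside which $\hat\mu$ agrees with the genuine measure $\mu_V$, which is (completely) additive by hypothesis. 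Then Theorem~\ref{thm: general Gleason} applies, under the hypothesis that $\cN$ has no type $\mathrm{I}_2$ summand, to give a unique (normal) state $\sigma_{\hat\mu} \in \SN$ with $\sigma_{\hat\mu}(p) = \hat\mu(p)$ for all $p \in \PN$.

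Finally I would verify the two assignments are inverse to each other. Starting from $\sigma$, restricting to contexts, and then reassembling and extending via Theorem~\ref{thm: general Gleason} returns a state agreeing with $\sigma$ on all projections; since projections span $\cN$ weakly and states are determined by their values on projections (again Theorem~\ref{thm: general Gleason}'s uniqueness, or spectral theory), this is $\sigma$ itself. Conversely, starting from $(\mu_V)_V$, extending to $\sigma_{\hat\mu}$, and restricting back gives $\sigma_{\hat\mu}|_V$, which agrees with $\mu_V$ on $\PV$ and hence equals $\mu_V$ as a measure. The main obstacle — really the only nontrivial ingredient — is the passage from local data to a globally additive measure on $\PN$, i.e.\ checking that countable (or arbitrary) orthogonal families, whose members need not pairwise lie in one ``small'' context chosen in advance, nonetheless all sit inside a single commutative subalgebra on which $\mu$ is honestly additive; once that is in place, Theorem~\ref{thm: general Gleason} does all the heavy lifting, and the type $\mathrm{I}_2$ exclusion is inherited directly from it.
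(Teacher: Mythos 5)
Your proposal is correct and follows essentially the same route as the paper: restriction of a (normal) state to contexts gives a global section, and conversely a global section reassembles into a finitely (completely) additive measure on $\PN$ (since any pairwise orthogonal family of projections generates a single commutative von Neumann subalgebra), to which Thm.~\ref{thm: general Gleason} is then applied. The paper's proof is simply a terser version of this, delegating the reassembly step to the discussion preceding the theorem; your explicit verification of well-definedness and of the two maps being mutually inverse fills in exactly what the paper leaves implicit.
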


\begin{proof}
    Note that every (normal) state $\sigma: \cN \ra \C$ defines a finitely (completely) additive probability measure over the projections of $\cN$ (cf. Thm. 7.1.12 in \cite{KadisonRingroseII}), which corresponds with a global section of $\PPi$ by the preceeding discussion. By the same correspondence, each global section of the (normal) probabilistic presheaf $\gamma \in \Gamma[\PPi(\VN)]$ extends to a (normal) state $\sigma \in \SN$ as a consequence of Thm.~\ref{thm: general Gleason}.
\end{proof}

%There is an obvious generalisation of Def.~\ref{defn: probabilistic presheaf} to Banach space-valued measures following Thm.~\ref{thm: Bunce-Wright I}, which in turn allows for a refinement of Gleason's theorem in contextual form (see Sec.~\ref{sec: ?}).

Next, we turn to a generalisation of Gleason's theorem in presheaf form to composite systems.

\subsection{Linearity without positivity.}

The canonical product on partial orders, denoted $\mc{V}_1 \times \mc{V}_2$, is the Cartesian product with elements $(V_1,V_2)$ for $V_1 \in \mc{V}_1$, $V_2 \in \mc{V}_2$ and order relations such that for all $\tilde{V}_1,V_1 \in \mc{V}_1$, $\tilde{V}_2,V_2 \in \mc{V}_2$:
\begin{equation*}\label{eq: product context category}
    (\tilde{V}_1,\tilde{V}_2) \subseteq (V_1,V_2) :\Longleftrightarrow \tilde{V}_1 \subseteq_1 V_1 \mathrm{\ and \ } \tilde{V}_2 \subseteq_2 V_2\; .
\end{equation*}
%The generalisation to the multipartite setting is analogous; for simplicity, we restrict the present discussion to the bipartite case.
It is interesting to ask whether the state space of the composite system can be recovered from product contexts $V \in \mc{V}(\cN_1) \times \mc{V}(\cN_2) \subsetneq \mc{V}(\cN_1 \otimes \cN_2)$ only. For this to be possible states on the subsystem algebras need to define unique states on the composite algebra and vice versa. For this reason we will consider the spatial tensor product between von Neumann algebras $\bar{\otimes}$ (for details, see section 11.2 in \cite{KadisonRingroseII}). Recall that given normal states $\sigma_1 \in (\cN_1)_*$ and $\sigma_2 \in (\cN_2)_*$, where $\cN_*$ denotes the predual of $\cN$, there exists a unique normal product state $\sigma = \sigma_1 \bar{\otimes} \sigma_2 \in (\cN_1 \bar{\otimes} \cN_2)_*$ (Prop. 11.2.7, \cite{KadisonRingroseII}). Conversely, every normal linear functional $\sigma \in \cN_*$ on the spatial tensor product $\cN = \cN_1 \bar{\otimes} \cN_2$ is the norm limit of normal product states %i.e., linear combinations of normal product states
$\sigma_1 \bar{\otimes} \sigma_2$ (Prop 11.2.8, \cite{KadisonRingroseII}). %In particular, every normal state $\sigma \in \cN_*$ can thus be approximated by finite linear combinations of normal product states.
It follows that we can identify the product context $V = (V_1,V_2) \in \mc{V}(\cN_1) \times \mc{V}(\cN_2)$ with the commutative von Neumann subalgebra $V = V_1 \bar{\otimes} V_2 \subset \cN_1 \bar{\otimes} \cN_2$ for every $V_1 \subset \cN_1$ and $V_2 \subset \cN_2$.

\begin{theorem}\label{thm: linearity vs positivity}
    Let $\cN_1$, $\cN_2$ be von Neumann algebras with no summand of type $\text{I}_2$ and let $\cN = \cN_1 \bar{\otimes} \cN_2$. There is a bijective correspondence between global sections of the normal probabilistic presheaf $\PPi$ of $\cN$ over $\mc{V}(\cN_1) \times \mc{V}(\cN_2)$ and normal linear functionals $\sigma: \cN \ra \C$ such that $\sigma(1) = 1$ and $\sigma(a \otimes b) \geq 0$ for all $a \in (\cN_1)_+$ and $b \in (\cN_2)_+$.
\end{theorem}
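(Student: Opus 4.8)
The plan is to establish the correspondence in two directions, with the forward direction (state to global section) being essentially trivial and the reverse direction (global section to linear functional) carrying all the work. For the forward direction: given a normal linear functional $\sigma$ with $\sigma(1)=1$ and $\sigma(a\otimes b)\geq 0$ for positive $a\in\cN_1$, $b\in\cN_2$, I would note that on each product context $V=V_1\bar\otimes V_2$ the restriction $\sigma|_V$ is positive (since positive elements of the commutative algebra $V_1\bar\otimes V_2$ are norm limits of sums of simple tensors $a\otimes b$ with $a,b\geq 0$) and normalised, hence a normal (completely additive) probability measure on $\PV$; compatibility under restriction is immediate, so we obtain a global section of $\PPi$ over $\mc{V}(\cN_1)\times\mc{V}(\cN_2)$.

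For the reverse direction, let $\gamma=(\mu_V)_{V\in\mc{V}(\cN_1)\times\mc{V}(\cN_2)}$ be a global section. The first step is to use Thm.~\ref{thm: general Gleason} \emph{locally}: for fixed $V_2\in\mc{V}(\cN_2)$, the assignment $V_1\mapsto \mu_{V_1\bar\otimes V_2}$ is (after a suitable normalisation argument) a global section of the normal probabilistic presheaf of $\cN_1\bar\otimes V_2$ — wait, more carefully, one fixes a pure state direction in $V_2$ and runs Gleason on $\cN_1$; I would instead argue directly. The cleanest route: for each pair of \emph{normal states} $\tau_2$ on $\cN_2$ supported in a context $V_2$, define $\nu_{\tau_2}: \PN[\cN_1]\to[0,1]$ by integrating $\mu$ against $\tau_2$ over the spectrum of $V_2$; show $\nu_{\tau_2}$ is a completely additive probability measure on $\cN_1$ (its restriction to each $V_1$ is $\mu_{V_1\bar\otimes V_2}$ paired with $\tau_2$), hence by Thm.~\ref{thm: general Gleason} extends to a unique normal state $\sigma_1^{\tau_2}$ on $\cN_1$. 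Then $(\tau_1,\tau_2)\mapsto \sigma_1^{\tau_2}(\tau_1)$ is separately affine and normal in each argument, so by the universal property of the spatial tensor product (Props.~11.2.7--11.2.8 of \cite{KadisonRingroseII}) it extends to a unique normal linear functional $\sigma$ on $\cN_1\bar\otimes\cN_2$ with $\sigma(1)=1$. Positivity of $\sigma(a\otimes b)$ for $a,b\geq 0$ follows because it equals $\sigma_1^{\tau_2}(a)$ with $\tau_2$ the normalisation of $b$, which is a state applied to a positive element. Finally I would check $\sigma$ restricts to $\mu_V$ on every product context $V=V_1\bar\otimes V_2$ — true since both are determined by the same pairings — and that this $\sigma$ is the unique such functional, because a normal functional on $\cN_1\bar\otimes\cN_2$ is determined by its values on simple tensors.

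The main obstacle I expect is the bookkeeping around \emph{normalisation and well-definedness} of the fibrewise construction: a global section $\mu_V$ restricted to a sub-context need not be a probability measure in the naive sense once one pairs against a state $\tau_2$ that is not the uniform/trace state, so the integration step must be set up so that $V_1\mapsto \mu_{V_1\bar\otimes V_2}(\cdot)$ paired with $\tau_2$ is genuinely additive and normalised on $\PN[\cN_1]$. Making this rigorous requires carefully exploiting that $V_1\bar\otimes V_2$ is commutative, so $\mu_{V_1\bar\otimes V_2}$ is honestly a measure on the Gelfand spectrum $\mathrm{spec}(V_1)\times\mathrm{spec}(V_2)$, and that these measures cohere under both $V_1$- and $V_2$-restrictions; the coherence in the $V_2$-direction is what lets the family $\{\sigma_1^{\tau_2}\}$ vary affinely in $\tau_2$. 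Once that is in place, the two applications of Gleason--Christensen--Yeadon (one genuine, on $\cN_1$, plus the symmetric role of $\cN_2$ absorbed into the tensor-product universal property) and the standard density argument for $\cN_*$ close the proof. Note that positivity on \emph{all} of $\cN$ is emphatically \emph{not} claimed here — that gap is precisely what motivates the dynamical-correspondence refinement in the sequel.
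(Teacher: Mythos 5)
Your forward direction is fine and matches the paper's in substance, and your opening move in the reverse direction --- freezing one tensor factor and applying the scalar Gleason--Christensen--Yeadon theorem fibrewise to obtain a family of normal states on one subalgebra indexed by the other --- is exactly how the paper begins (there: conditioning on $p\in\mc{P}(\cN_1)$ to obtain normal states $\sigma_2^p\in(\cN_2)_*$). The gap is in the assembly step. What the fibrewise argument hands you is a map $p\mapsto\mu(p)\,\sigma_2^p$ from $\mc{P}(\cN_1)$ into the Banach space $(\cN_2)_*$ that is only finitely/completely additive on orthogonal families of projections; turning this \emph{vector-valued quasi-linear measure} into a bounded normal linear map $\cN_1\ra(\cN_2)_*$ is the entire content of the reverse direction, and the scalar Gleason theorem (Thm.~\ref{thm: general Gleason}) cannot do it, since it applies only to positive normalised scalar measures. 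The paper invokes Theorem~A of Bunce and Wright \cite{BunceWright1992} (the Banach-space-valued solution of the Mackey--Gleason problem) at precisely this point. Your proposal replaces this with ``the universal property of the spatial tensor product (Props.~11.2.7--11.2.8 of \cite{KadisonRingroseII})'', but those propositions only assert that products of normal states exist and that their span is norm dense in $(\cN_1\bar{\otimes}\cN_2)_*$; they provide no mechanism for promoting a separately affine, separately normal pairing $(\tau_1,\tau_2)\mapsto\sigma_1^{\tau_2}(\,\cdot\,)$ to a single normal linear functional on $\cN_1\bar{\otimes}\cN_2$. That step is not a formal universal-property statement, and it is where the difficulty lives.

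A secondary problem is that your intermediate object $\nu_{\tau_2}$, defined by ``integrating $\mu$ against $\tau_2$'', is only canonically defined when $\tau_2$ is (the normalisation of) a projection $q\in\mc{P}(\cN_2)$, via $\nu_q(p)=\mu(p\otimes q)$; extending from projections $q$ to arbitrary normal states $\tau_2$ in a way that is affine in $\tau_2$ already requires a Gleason-type linearisation in the second variable, which is circular with the step you are attempting. Iterating the scalar theorem in each variable does yield a bounded, separately normal bilinear form on $\cN_1\times\cN_2$ that is positive on pairs of positive elements, but the passage from such a form to a normal functional on the spatial tensor product is again nontrivial. To repair the argument, follow the paper: show that $\widetilde{\varrho}_\gamma:\mc{P}(\cN_1)\ra(\cN_2)_*$, $\widetilde{\varrho}_\gamma(p)=\mu_{V_1}^\gamma(p)\,\sigma_2^p$, is a bounded completely additive vector measure, apply Thm.~A of \cite{BunceWright1992} to extend it to a normal linear map $\widetilde{\phi}_\gamma:\cN_1\ra(\cN_2)_*$, and set $\sigma_\gamma(a\otimes b):=\widetilde{\phi}_\gamma(a)(b)$; the remainder of your outline (normalisation, positivity on products, uniqueness from density of product states) then goes through as you describe.
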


\begin{proof}
    Clearly, every normal product state $\sigma = \sigma_1 \bar{\otimes} \sigma_2 \in \cN_*$ yields a product global section $\gamma_\sigma = \gamma_{\sigma_1} \times \gamma_{\sigma_2} \in \Gamma[\PPi(\mc{V}(\cN_1) \times \mc{V}(\cN_2))]$, as a consequence of Thm.~\ref{thm: Gleason in context form}, applied to $\cN_1$ and $\cN_2$ individually. In particular, $\gamma(1) = \sigma(1) = 1$ by normalisation, and $\sigma(a \otimes b) \geq 0$ for all $a \in (\cN_1)_+$ and $b \in (\cN_2)_+$ by positivity of the probability measures $\mu^\gamma_V$ for all $V\in\VN$. Since every normal state can be approximated by finite linear combinations of normal product states (Prop 11.2.8, \cite{KadisonRingroseII}), the correspondence extends to all normal states $\sigma \in \cN_*$.
    
Conversely, we show that every global section $\gamma \in \Gamma[\PPi(\mc{V}(\cN_1) \times \mc{V}(\cN_2))]$ defines a unique normal linear functional $\sigma_\gamma$ on $\cN = \cN_1 \bar{\otimes} \cN_2$, which restricts to $\gamma$ on $\mc{V}(\cN_1) \times \mc{V}(\cN_2)$, i.e., $\gamma = (\sigma|_{\mc{P}(V)})_{\mc{V}(\cN_1) \times \mc{V}(\cN_2)}$. Fix a context $V_1 \in \mc{V}(\mc{N}_1)$ and consider the corresponding partial order of contexts under inclusion, inherited from $\mc{V}_{1\&2} := \mc{V}(\mc{N}_1) \times \mc{V}(\mc{N}_2)$ by restriction,
\begin{equation*}
    \mc{V}_{1\&2}(V_1) := \{V_1 \times V_2 \mid V_2 \in \mc{V}(\mc{N}_2)\}\; .
\end{equation*}
In every context $V = V_1 \bar{\otimes} V_2 \in \mc{V}_{1\&2}$, the probability measure $\mu^\gamma_V \in \PPi(\mc{V}_{1\&2})_V$ corresponding to the global section $\gamma$ can be written (in terms of conditional probabilities)
as follows:
\begin{equation}\label{eq: local Gleason}
    \forall p \in \mc{P}(V_1), q \in \mc{P}(V_2):\ \mu^\gamma_V(p,q) = \mu_{V_1}^\gamma(p) \mu_{V_2}^\gamma(q \mid p) = \mu_{V_1}^\gamma(p) \gamma_2^p(q) = \mu_{V_1}^\gamma(p) \sigma_2^p(q)\; .
\end{equation}
%NB: Note that every abelian von Neumann algebra is $\sigma$-finite, hence, the Radon-Nikodym theorem applies and conditional expectations exist
%NB: Alternatively, note that every $\mu^\gamma_V = \sum_k \mu^\gamma_{V_1}(p(k))\mu^\gamma_{V_2}(q(k))$ is the norm-closed limit of product measures in every context. By quasi-linearity in contexts, any sequence agrees across contexts, hence, so does its limit.
Here, $(\mu_{V_2}^{\gamma_2} (\cdot \mid p))_{V_2 \in \mc{V}(\mc{N}_2)} =: \gamma_2^p \in \Gamma[\PPi(\mc{V}_{1\&2}(V_1))]$ is a global section of the probabilistic presheaf $\PPi(\mc{V}_{1\&2}(V_1))$, which also depends on $p \in \mc{P}(V_1)$. Since $\mc{V}_{1\&2}(V_1) \cong \mc{V}(\mc{N}_2)$, such global sections correspond with normal states on $\cN_2$ %$\Gamma[\PPi(\mc{V}(\mc{N}_2))] \cong (\mc{N}_2)_*$
by Thm.~\ref{thm: Gleason in context form}, hence, $\sigma_2^p \in (\cN_2)_*$ for all $p \in \mc{P}(V_1)$. Moreover, since $V_1 \in \mc{V}(\mc{N}_1)$ was arbitrary, Eq.~(\ref{eq: local Gleason}) holds for all $p \in \mc{P}(\cN_1)$.

Define a map $\widetilde{\varrho}_\gamma: \mc{P}(\cN_1) \ra (\cN_2)_*$ by $\widetilde{\varrho}_\gamma(p) := \mu_{V_1}^\gamma(p) \sigma_2^p$. Clearly, $\widetilde{\varrho}_\gamma$ is bounded and $(\cN_2)_*$ is a Banach space (as a closed subspace of the continuous dual of $\mc{N}_2$). Moreover, for $p = p_1 + p_2$ with $p_1,p_2 \in \mc{P}(\mc{N}_1)$ orthogonal, i.e., $p_1p_2 = 0$,
\begin{equation*}\label{eq: finite additive operator measure}
    \widetilde{\varrho}_\gamma(p) = \mu_{V_1}^\gamma(p) \sigma_2^p
    = \mu_{V_1}^\gamma(p_1)\sigma_2^{p_1} + \mu_{V_1}^\gamma(p_2)\sigma_2^{p_2}
    = \widetilde{\varrho}_\gamma(p_1) + \widetilde{\varrho}_\gamma(p_2)\; ,
\end{equation*}
by additivity of $\gamma$. Similarly, $\widetilde{\varrho}_\gamma$ inherits complete additivity from $\gamma$.
%Moreover, $\widetilde{\mc{S}}_2 \subset (\cN_2)_*$ in $\widetilde{\varrho}_\gamma: \mc{P}(\mc{N}_1) \rightarrow \widetilde{\mc{S}}_2$ is closed and thus itself a Banach space:} consider $\mu_{V_1}^\gamma(p_k) \sigma_2^{p_k}$ in $\widetilde{\mc{S}}_2$ where $p_k \rightarrow p$ in the weak operator topology on $\mc{N}_1$ (restricted to $\mc{P}(\cN_1)$), and note that by symmetry of Eq.~(\ref{eq: local Gleason}) and for $q \in \mc{P}(\mc{N}_2)$ arbitrary, we have the identity
%\begin{equation*}
    %\mu_{V_1}^\gamma(p_k) \sigma_2^{p_k}(q) = \sigma_1^q(p_k) \mu_{V_2}^\gamma(q)\; .
%\end{equation*}
%Here, the right hand side converges to $\sigma_1^q(p) \mu_{V_2}^\gamma(q)$ with $\sigma_1^q \in \mc{S}(\cN_1) \cong \Gamma[\PPi(\mc{V}(\mc{N}_1))]$ as $\mc{S}(\cN_1)$ is a Banach space. %It follows that $\widetilde{\mc{S}}_2$ is also a Banach space, hence, $\widetilde{\varrho}^\gamma$ is a completely additive, $\widetilde{\mc{S}}_2$-valued measure
%NB: note that such measures are not normalised in \cite{BunceWright1992}
%on $\mc{P}(\mc{N}_1)$.
By Thm.~A in \cite{BunceWright1992}, $\widetilde{\varrho}_\gamma$ uniquely extends to a normal linear map $\widetilde{\phi}_\gamma: \cN_1 \ra (\cN_2)_*$. Hence, $\sigma_\gamma(p \otimes q) := \widetilde{\phi}_\gamma(p)(q)$ is a normal linear functional such that $\sigma_\gamma(1) = \gamma(1) = 1$ and $\sigma_\gamma(a \otimes b) \geq 0$ for all $a \in (\cN_1)_+$, $b \in (\cN_2)_+$ by positivity of the (product) measures $\mu^\gamma_V$ for all $V\in\mc{V}(\cN_1)\times\mc{V}(\cN_2)$.
%\mu_{V_1}^\gamma(1) \sigma_2^1(1) = 1$ since both $\mu_{V_1}^\gamma$ and $\sigma_2^1$ are normalised.
\end{proof}

We emphasise that the reduction to products of commutative von Neumann algebras in $\mc{V}(\cN_1) \times \mc{V}(\cN_2)$ is already sufficient to lift quasi-linear measures to linear functionals. In finite dimensions this has been observed before \cite{KlayRandallFoulis1987,Wallach2002}. Yet, unlike Thm.~\ref{thm: Gleason in context form} the linear functionals thus obtained are not necessarily positive, hence, Thm.~\ref{thm: linearity vs positivity} does not provide a classification of states on $\cN = \cN_1 \bar{\otimes} \cN_2$. From this perspective, it seems unjustified to call Thm.~\ref{thm: linearity vs positivity} a generalisation of Gleason's theorem (Thm.~\ref{thm: general Gleason}) to composite systems.

\section{A composite Gleason theorem}

Thm.~\ref{thm: linearity vs positivity} makes it clear that in order to obtain a generalisation of Gleason's theorem, which singles out the state spaces of composite systems, one needs to strengthen Def.~\ref{defn: probabilistic presheaf}. In particular, one needs to characterise positivity of the linear functional $\sigma: \cN_1 \bar{\otimes} \cN_2 \ra \C$ in Thm.~\ref{thm: linearity vs positivity}. We will derive positivity of $\sigma$ from complete positivity of an associated map $\phi: \cN_1 \ra \cN^*_2$ (where $\cN_2^*$ denotes the algebra under the adjoint, see Sec.~\ref{sec: time orientation} below).

Recall that in the proof of Thm.~\ref{thm: linearity vs positivity} we constructed a map $\widetilde{\phi}: \cN_1 \ra (\cN_2)_*$, into normal states on $\cN_2$ (up to normalisation). We may therefore identify the normal states $\widetilde{\phi}(p) \in (\cN_2)_*$ for every $p \in \mc{P}(\cN_1)$ with trace-class operators in $\cN_2$.
%with respect to the standard form?!
More precisely, consider a faithful representation of the von Neumann algebra $\cN_2$ as bounded operators on a Hilbert space $\cH_2$.\footnote{We may take $\cH_2$ to be the unique Hilbert space defined by the standard form of $\cN_2$ \cite{Haagerup1975}.}
%Consider the cyclic representation $\pi_{\sigma_2}: \cN_2 \ra \mc{B}(\cH_2)$ with respect to $\sigma_2$, for which $\sigma(a) = v^*\pi_{\sigma_2}(a)v$ with $v: \C \ra \cH$ by the GNS construction \cite{GelfandNaimark1943}.
%Note that the direct sum of all cyclic representations arising from pure states via the GNS construction yields a faithful representation $\pi$ of $\cN$ \cite{GelfandNaimark1943}. 
With respect to the latter, every normal state $\sigma \in (\cN_2)_*$ is of the form $\sigma(b) = \mathrm{tr}_{\cH_2}[\rho^*_\sigma b] = \mathrm{tr}_{\cH_2}[\rho_\sigma b]$ for all $b \in \cN_2$ (Thm.~7.1.9 in \cite{KadisonRingroseII}); in physics parlance, $\rho_\sigma$ is also called a \emph{density matrix}. Throughout, we will indicate this identification in maps by writing $\phi: \cN_1 \ra \cN^*_2$ as opposed to $\widetilde{\phi}: \cN_1 \ra (\cN_2)_*$, and similarly $\varrho: \mc{P}(\cN_1) 
\ra \cN^*_2$ as opposed to $\widetilde{\varrho}: \mc{P}(\cN_1) 
\ra (\cN_2)_*$.
%NB: in fact, the image is contained in the ideal of trace-class operators '$\mc{B}_1(\cH_2)$'

Under this identification we seek conditions that ensure that $\phi$ is not only positive but also completely positive.
%By Stinespring's theorem, completely positive maps correspond with $C^*$-homomorphisms.
%\begin{theorem}[Stinespring \cite{Stinespring1955}]
    %Let $\cA$ be a $C^*$-algebra with a unit, let $\cH$ be a Hilbert space, and let $\mu$ be a linear function from $\cA$ to operators on $\cH$. Then a necessary and sufficient condition that $\mu$ have the form
    %\begin{equation*}
        %\mu(A) = V^*\rho(a)V \quad \quad \forall a \in \cA\; ,
    %\end{equation*}
    %where $V$ is a bounded linear transformation from $\cH$ to a Hilbert space $\cK$ and $\rho$ is a $*$-representation of $\cA$ into operators on $\cK$, is that $\mu$ be completely positive.
%\end{theorem}
%In fact, there is a slightly weaker notion of positivity, which we will need in the following.
%\begin{theorem}[St\o rmer \cite{Stormer1982}]\label{thm: Stinespring}
    %Let $\cA$ be a $C^*$-algebra and $\phi$ a linear map of $\cA$ into $\BH$. Then $\phi$ is decomposable if and only if for all $n \in \N$ whenever $(x_{ij})$ and $(x_{ji})$ belong to $M_n(\cA)_+$ then $(\phi(x_{ij})) \in M_n(\BH)_+$.
%\end{theorem}
We will achieve this in two steps: first, we extend the constraints in Gleason's theorem to dilations in Sec.~\ref{sec: dilation}; %This will promote $\phi$ to a Jordan *-homomorphism.
second, we %lift the Jordan *-homomorphism to a $C^*$-homomorphism by
enforce a consistency condition with respect to dynamical correspondences in Sec.~\ref{sec: time orientation}. Finally, in Sec.~\ref{sec: main result} we prove our main result: a generalisation of Gleason's theorem to composite systems. Along the way we show that neither of these additional constraints jeopardises applicability in the single system case. Indeed, we prove sharper versions of Gleason's theorem, thereby addressing a number of subtleties that only arise for composite systems.

\subsection{Dilations.}\label{sec: dilation}

First, we require that measures $\mu_V$ admit dilations in every commutative von Neumann subalgebra $V \in \VN$. More precisely, by Gelfand duality we may identify every context $V \in \VN$ with a compact Hausdorff $X$ space such that $\mu_V$ becomes a %positive operator-valued
measure on $X$. In particular, we identify the projections $\mc{P}(V)$ with the clopen subsets of the Gelfand spectrum of $V$. In this way, we can interpret $\mu_V$ as a $\mc{B}(\cH)$-valued measure for $\cH=\C$ (thus also $\mc{B}(\cH) \cong \mathbb{C}$). By Naimark's theorem \cite{Naimark1943,Stinespring1955}, the latter admits a spectral dilation of the form $\mu_V = v^*\varphi_V v$, where $\varphi_V: \mc{P}(V) \hookrightarrow \mc{P}(\cK)$ is an embedding for some Hilbert space $\cK$ and $v: \C \ra \cK$ is a linear map (equivalently, a vector $v \in \mc{K}$).
%Note that we may think of $v$ as a linear map $v: \mathbb{C} \rightarrow \mc{K}$ by scalar multiplication.
%NB: \footnote{Note that $v$ is the vector state of $\sigma_\mu$ in the respective GNS representation.}
By choosing $\cK$ sufficiently large, we can choose $\cK$ independent of contexts $V \in \VN$. Moreover, we take $v$ to be independent of contexts $V \in \VN$, since we can absorb any context dependence on $v$ into $\varphi_V$.

More specifically, by a finitely (completely) additive embedding we mean $\varphi_V(0) = 0$, $\varphi_V(1-p) = 1 - \varphi_V(p)$, and $\varphi_V(p_1 + p_2) = \varphi_V(p_1) + \varphi_V(p_2)$ for all $p_1,p_2 \in \mc{P}(V)$, $p_1p_2=0$ ($\varphi_V(\sum_{i\in I} p_i) = \sum_{i\in I} \varphi_V(p_i)$ for every family of pairwise
orthogonal projections $(p_i)_{i \in I}$, $p_i \in \PN$). We further require that embeddings preserve spatial tensor products, i.e., whenever $V = V_1 \bar{\otimes} V_2$, then $\varphi_V= \varphi_{V_1} \bar{\otimes} \varphi_{V_2}$, were $\varphi_{V_1}: \mc{P}(V_1) \ra \mc{P}(\cK_1)$ and $\varphi_{V_2}: \mc{P}(V_2) \ra \mc{P}(\cK_2)$ are embeddings and $\cK = \cK_1 \otimes \cK_2$. Taken together, this suggests to strengthen Def.~\ref{defn: probabilistic presheaf} as follows.

\begin{definition}\label{def: dilated probabilistic presheaf}
    Let $\cN$ be a von Neumann algebra with context category $\VN$.
    %and $\cK$ a Hilbert space.
    The \emph{(normal) dilated probabilistic presheaf} $\PPi_D$ of $\cN$ over $\VN$ %\footnote{In a slight abuse of notation we will use the same notation $\PPi_D(\VN)$ for the probabilistic and the dilated probabilistic presheaf of $\cN$ over $\VN$.}
    is the presheaf given
    \begin{itemize}
    	\item [(i)] on objects: for all $V\in\VN$, let
    	\begin{align*}
    			&\PPi_D(V):=\{\mu_V:\PV\ra [0,1] \mid
    		    \mu_V \text{a probability measure of the form } \mu_V = v^* \varphi_V v\text{ with}\\
    		    &\ v: \C \ra \cK \text{ linear},
    		    %\footnote{Note that we can choose $\cK$ independently of contexts. In particular, in finite dimensions every measure admits a dilation for $\mathrm{dim}(\cK) \geq \mathrm{dim}(\cN)$ \cite{Naimark1943,Stinespring1955}.}
    			\text{ and } \varphi_V: \mc{P}(V) \hookrightarrow \mc{P}(\mc{K}) \text{ a finitely (completely) additive embedding}\}\; ,
    	\end{align*}
    	\item [(ii)] on arrows: for all $V,\tilde{V}\in\VN$, if $\tilde{V}\subseteq V$, let
    	\begin{equation*}
    	    \PPi_D(i_{\tilde{V}V}): \PPi_D(V) \ra \PPi_D(\tilde{V}) \text{ with } \varphi_V \mapsto \varphi_V|_{\tilde{V}}\; .
    	\end{equation*}
    \end{itemize}
\end{definition}

As in Def.~\ref{defn: probabilistic presheaf}, the condition that $\varphi_V$ preserves orthogonality is required in commutative von Neumann subalgebras only. Consequently, Def.~\ref{def: dilated probabilistic presheaf} only requires global sections of $\PPi_D$ to be quasi-linear. The key difference to the probabilistic presheaf $\PPi$ in Def.~\ref{defn: probabilistic presheaf} is that we require the constraints to hold also with respect to dilations.

If Def.~\ref{def: dilated probabilistic presheaf} is to generalise the single system case, it should be consistent with it. We therefore start by analysing global sections of the dilated probabilistic presheaf in the original setting. We have the following refined version of Thm.~\ref{thm: Gleason in context form}.

\begin{theorem}\label{thm: Gleason in context form II}
    \textbf{\emph{(Gleason in contextual form (II))}} Let $\cN$ be a von Neumann algebra with no summand of type $I_2$. There is a bijective correspondence between (normal) states on $\cN$ and global sections of the (normal) dilated probabilistic presheaf $\PPi_D$ of $\cN$ over $\VN$.
\end{theorem}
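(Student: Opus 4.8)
The plan is to establish the bijection by reducing it to Theorem~\ref{thm: Gleason in context form}, i.e., to show that dilatability of the family $(\mu_V)_{V\in\VN}$ is automatic in the single system case and imposes no extra constraint beyond quasi-linearity. In one direction this is immediate: given a (normal) state $\sigma$, Theorem~\ref{thm: Gleason in context form} already produces a global section $\gamma = (\mu_V)_{V\in\VN}$ of $\PPi$; it remains to exhibit, for each $V$, a Hilbert space $\cK$, a vector $v\in\cK$, and a finitely (completely) additive embedding $\varphi_V:\mc{P}(V)\hra\mc{P}(\cK)$ with $\mu_V = v^*\varphi_V v$, compatibly under restriction along inclusions $\tilde V\subseteq V$. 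Since every $\mu_V$ is a genuine probability measure on the Gelfand spectrum $X_V$ of $V$, this is exactly Naimark's dilation theorem for the POVM (here PVM-valued, since $\varphi_V$ lands in projections) $\mc{P}(V)\to\C$: represent $L^\infty(X_V,\mu_V)$ on $\cK_V := L^2(X_V,\mu_V)$ by multiplication operators, take $v$ to be the constant function $1$, and let $\varphi_V(p)$ be multiplication by the indicator $\mathbbm{1}_p$. One then enlarges $\cK$ once and for all (e.g. a sufficiently large $L^2$ space, or the GNS space of $\sigma$ itself) so that it can be chosen independent of $V$, absorbing the residual $v$-dependence into $\varphi_V$ as indicated before the statement. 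Compatibility under restriction and with spatial tensor products holds because $L^2$ and multiplication operators are functorial in the obvious way.

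For the converse — the substantive direction — I would start from a global section $\gamma\in\Gamma[\PPi_D]$ and simply forget the dilation data: the family $(\mu_V)_{V\in\VN}$ is still a global section of the undilated presheaf $\PPi$, since $\mu_V = v^*\varphi_V v$ is a bona fide finitely (completely) additive probability measure on $\mc{P}(V)$ for each $V$ (finite/complete additivity of $\mu_V$ follows from that of $\varphi_V$ together with linearity of $w\mapsto v^* w v$), and the restriction maps in $\PPi_D$ project down to those in $\PPi$. Theorem~\ref{thm: Gleason in context form} then yields a unique (normal) state $\sigma_\gamma\in\SN$ with $\sigma_\gamma|_{\mc{P}(V)} = \mu_V$ for all $V$. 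Uniqueness of the correspondence is inherited from Theorem~\ref{thm: Gleason in context form} once one checks that the dilation-augmented section determined by $\sigma_\gamma$ via the first direction agrees with $\gamma$ on the underlying measures, which is all that is being claimed (the extra embedding data is auxiliary scaffolding, not part of the state).

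The main obstacle — and the only place where real care is needed — is the claim that every member of every global section of $\PPi_D$ is in fact a member of the corresponding $\PPi_V$, i.e., that a measure of the form $v^*\varphi_V v$ with $\varphi_V$ merely finitely additive genuinely defines a finitely additive probability measure, and dually that the Naimark dilation constructed in the forward direction can be chosen coherently across the whole poset $\VN$ (one global $\cK$, one global $v$). For coherence one cannot dilate each $V$ independently; the clean fix is to build the dilation directly from the GNS representation $(\cH_\sigma, \pi_\sigma, \Omega_\sigma)$ of $\sigma_\gamma$: set $\cK = \cH_\sigma$, $v = \Omega_\sigma$, and $\varphi_V(p) = \pi_\sigma(p)$, which is manifestly a completely additive embedding (when $\sigma$ is normal, $\pi_\sigma$ is normal, hence preserves arbitrary orthogonal sums of projections), is compatible with all inclusions since $\pi_\sigma$ is defined on all of $\cN$, and satisfies $v^*\varphi_V(p) v = \langle\Omega_\sigma, \pi_\sigma(p)\Omega_\sigma\rangle = \sigma_\gamma(p) = \mu_V(p)$. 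The spatial-tensor-product compatibility then reduces to the standard fact that the GNS space of a product state factorises, $\cH_{\sigma_1\bar\otimes\sigma_2}\cong\cH_{\sigma_1}\otimes\cH_{\sigma_2}$, together with the density argument of Proposition~11.2.8 in \cite{KadisonRingroseII} already invoked in the proof of Theorem~\ref{thm: linearity vs positivity}. With this choice the two directions are mutually inverse and the bijection follows.
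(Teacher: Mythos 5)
Your proposal is correct and follows essentially the same route as the paper: the forward direction is handled by the GNS dilation of the state (which is exactly the paper's construction, and indeed the cleaner "global" fix you settle on after the per-context Naimark detour), and the converse is handled by forgetting the dilation data and invoking the undilated Gleason correspondence, with the identification of sections agreeing on underlying measures matching the paper's footnote. The only difference is that the paper's converse additionally records that the family $(\varphi_V)_{V\in\VN}$ extends to a Jordan $*$-homomorphism via Bunce--Wright, but that extra structure is used later (in Lemma~\ref{lm: global sections to Jordan homos}) rather than being needed for the bijection itself.
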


\begin{proof}
    Let $\sigma \in \mc{S}(\cN)$  be (normal) state and consider the corresponding representation $\pi: \cN \ra \mc{B}(\cK)$ arising via the GNS construction from the inner product $(a,b) := \sigma(a^*b)$. In this representation, $\sigma$ is a vector state $\sigma(a) = (\pi(a)v,v)$ with $v \in \cK$ for all $a \in \cN$. Interpreting $v: \mathbb{C} \ra \cK$ as a linear map, this reads $\sigma(a) = v^* \pi(a) v$ for all $a \in \cN$, and since $\pi$ is a representation, $\varphi := \pi|_{\PN}$ is an orthomorphism. It follows that $(\sigma|_{\mc{P}(V)})_{V\in\VN}$ defines a global section of the (normal) dilated probabilistic probabilistic $\PPi_D(\VN)$.
    
    Conversely, recall that $\gamma = (v^* \varphi_V v)_{V \in \VN} \in \Gamma[\PPi(\VN)]$ defines %a finitely (completely) additive map $\sigma_\gamma: \PN \ra \C$. Hence,
    a (normal) state $\sigma_\gamma \in \SN$ by Thm.~\ref{thm: general Gleason}. In fact, %finite (complete) additivity extends to the embeddings $\varphi_V: \mc{P}(V) \rightarrow \mc{P}(\cK)$. By
    the embeddings $(\varphi_V)_{V \in \VN}$ define an orthomorphism $\varphi_\gamma: \PN \ra \mc{P}(\cK)$, which by Cor.~2 in \cite{BunceWright1993} extends to a (normal) Jordan $*$-homomorphism $\Phi_\gamma: \cN \rightarrow \mc{B}(\cK)$ such that $\sigma_\gamma = v^*\Phi_\gamma v$ and $\varphi_\gamma = \Phi_\gamma|_{\PN}$. This will become important in Lm.~\ref{lm: global sections to Jordan homos} below.
    %It immediately follows that $\sigma = v^* \phi v \in \mc{S}(\cN)$, $v \in \mc{K}$ is a (normal) state: it is both bounded and positive since $\gamma$ is.}
    %\textbf{Note:} no need to choose orientations in this case, since the functional $v^*\varphi v$ maps commutators to zero?!
    %flag: It does not matter whether $\cH$, $\cK$ are of the same dimension in this construction!
\end{proof}

Despite the fact that, by Thm.~\ref{thm: Gleason in context form} and Thm.~\ref{thm: Gleason in context form II}, global sections of probabilistic and dilated probabilistic presheaf correspond with states on $\cN$,\footnote{Since two global sections $\gamma,\gamma' \in \Gamma[\PPi_D(\VN)]$ are identified if and only if $\mu^\gamma_V = \mu^{\gamma'}_V$ for all $V \in \VN$.} the dilated probabilistic presheaf encodes more constraints than the probabilistic presheaf. %Intuitively, this is clear since additivity not only holds with respect to the probability measure $\mu_V$ in Def.~\ref{defn: probabilistic presheaf}, but also with respect to the \emph{dilated} measures $\mu_{V,q_i} = v_i^* \varphi v_i$ (see Def.~\ref{def: dilated probabilistic presheaf}).
To see this, note that we deduced linearity in Thm.~\ref{thm: linearity vs positivity} from a theorem due to Bunce and Wright in \cite{BunceWright1992}. Under the conditions of Def.~\ref{def: dilated probabilistic presheaf} a stronger version applies \cite{BunceWright1993}, leading to the following refinement of Thm.~\ref{thm: linearity vs positivity}.%\footnote{As in Thm.~\ref{thm: linearity vs positivity}, it is important to restrict to the normal states space of $\cN$, i.e., its predual $\cN_*$.}

\begin{lemma}\label{lm: global sections to Jordan homos}
    Let $\mc{N}_1$, $\mc{N}_2$ be von Neumann algebras with no summand of type $I_2$, $\cN = \cN_1 \bar{\otimes} \cN_2$, and let $\PPi_D(\mc{V}(\cN_1) \times \mc{V}(\cN_2))$ be the normal dilated probabilistic presheaf over the composite context category $\mc{V}(\cN_1) \times \mc{V}(\cN_2)$. Then every global section $\gamma \in \Gamma[\PPi_D(\mc{V}(\cN_1) \times \mc{V}(\cN_2))]$ uniquely extends to a normal linear functional $\sigma_\gamma \in \cN_*$, given for all $a \in \cN_1$ and $b \in \cN_2$ by
    \begin{equation}\label{eq: decomposable linear functional}
        \sigma_\gamma(a \otimes b) = \widetilde{\phi}_\gamma(a)(b)\; ,
    \end{equation}
    where $\phi_\gamma: \cN_1 \ra \cN_2$ is of the form $\phi_\gamma = w^*\Phi_\gamma w$ with $\Phi_\gamma: \cN_1 \ra \mc{B}(\cK)$ a normal Jordan $*$-homomorphism and $w: \cH_2 \ra \cK$ a bounded linear map for some Hilbert space $\cK$ and $\cN_2 \subset \mc{B}(\cH_2)$.
    %Conversely, every normal linear functional $\sigma \in \cN_*$ of the form in Eq.~(\ref{eq: decomposable linear functional}) restricts to a unique global section $\gamma_\sigma \in \Gamma[\PPi_D(\mc{V}(\cN_1) \times \mc{V}(\cN_2))]$.
\end{lemma}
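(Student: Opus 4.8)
The plan is to combine the linearity result from Theorem~\ref{thm: linearity vs positivity} with the sharper Naimark/Stinespring-type dilation statement available under Def.~\ref{def: dilated probabilistic presheaf}, exactly as the single-system refinement in Thm.~\ref{thm: Gleason in context form II} upgrades Thm.~\ref{thm: Gleason in context form}. Concretely, fix a context $V_1 \in \mc{V}(\cN_1)$ and, as in the proof of Thm.~\ref{thm: linearity vs positivity}, restrict $\gamma$ to the sub-poset $\mc{V}_{1\&2}(V_1) \cong \mc{V}(\cN_2)$. For each projection $p \in \mc{P}(V_1)$ the conditional data $\gamma_2^p$ is a global section of the \emph{dilated} probabilistic presheaf of $\cN_2$ (the dilation structure restricts along the tensor factorisation $\varphi_V = \varphi_{V_1}\bar\otimes\varphi_{V_2}$ built into Def.~\ref{def: dilated probabilistic presheaf}), so by Thm.~\ref{thm: Gleason in context form II} it corresponds to a normal state $\sigma_2^p \in (\cN_2)_*$, and we obtain as before the finitely (completely) additive measure $\widetilde\varrho_\gamma: \mc{P}(\cN_1) \ra (\cN_2)_*$, $p \mapsto \mu_{V_1}^\gamma(p)\,\sigma_2^p$.

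The key step is then to invoke the stronger Bunce--Wright extension theorem (Cor.~2 in \cite{BunceWright1993}) in place of Thm.~A of \cite{BunceWright1992}: because the dilated presheaf encodes the embeddings $\varphi_{V_1}: \mc{P}(V_1)\hra\mc{P}(\cK)$ consistently across all $V_1\in\mc{V}(\cN_1)$, these assemble into an orthomorphism $\varphi_\gamma: \mc{P}(\cN_1)\ra\mc{P}(\cK)$, which extends to a normal Jordan $*$-homomorphism $\Phi_\gamma: \cN_1 \ra \mc{B}(\cK)$. Identifying the normal states on $\cN_2$ with density operators via a faithful representation $\cN_2\subset\mc{B}(\cH_2)$ (as set up in the preamble to this section), the measure $\widetilde\varrho_\gamma$ — now viewed as $\varrho_\gamma: \mc{P}(\cN_1)\ra\cN_2^*$ — is exactly of the dilated form $v^*\varphi_\gamma(\cdot)v$ with $v$ absorbing the weight $\mu_{V_1}^\gamma$ and the vector states $\sigma_2^p$, so the extension is $\phi_\gamma = w^*\Phi_\gamma w$ with $w: \cH_2\ra\cK$ bounded. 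Setting $\sigma_\gamma(a\otimes b) := \widetilde\phi_\gamma(a)(b)$ and extending bilinearly/by normality gives the claimed normal linear functional on $\cN_1\bar\otimes\cN_2$; uniqueness follows as in Thm.~\ref{thm: linearity vs positivity} from density of linear combinations of product states (Prop.~11.2.8 in \cite{KadisonRingroseII}) together with the uniqueness clause of the Bunce--Wright extension.

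I expect the main obstacle to be the bookkeeping that turns the \emph{pointwise} (in $p\in\mc{P}(\cN_1)$) dilation data $\sigma_2^p = v^*\varphi_V^{(2,p)} v$ into a \emph{single} dilation $w: \cH_2\ra\cK$ independent of $p$, and simultaneously compatible with the $\cN_1$-side orthomorphism $\varphi_\gamma$. One has to check that the dilating Hilbert space $\cK$ and the map $w$ can be chosen uniformly — this is where the freedom in Def.~\ref{def: dilated probabilistic presheaf} to enlarge $\cK$ and absorb context dependence into the embeddings is essential — and that the resulting $\Phi_\gamma$ is genuinely a Jordan $*$-homomorphism on all of $\cN_1$, not merely on each commutative $V_1$. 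A secondary subtlety is verifying that $\phi_\gamma$ indeed lands in $\cN_2$ (equivalently, that $w^*\Phi_\gamma(a)w \in \cN_2 \subset \mc{B}(\cH_2)$ for all $a$), which uses normality of $\Phi_\gamma$ and the fact that each $\sigma_2^p$ is a normal state on $\cN_2$; the trace-class/normality identification set up before the lemma handles this, but it deserves an explicit remark.
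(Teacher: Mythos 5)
Your strategy coincides with the paper's: reuse the construction of $\widetilde{\varrho}_\gamma$ from Thm.~\ref{thm: linearity vs positivity}, use the embeddings $\varphi_{V_1}$ supplied by Def.~\ref{def: dilated probabilistic presheaf} to assemble an orthomorphism on all of $\mc{P}(\cN_1)$, extend it to a normal Jordan $*$-homomorphism via Bunce--Wright \cite{BunceWright1993}, and package the $\cN_2$-side data into a single intertwiner $w$. So the route is right. However, the step you yourself flag as ``the main obstacle'' --- producing one bounded map $w:\cH_2\ra\cK$, independent of $p$, such that $\widetilde{\varrho}_\gamma(p)(q)=\mathrm{tr}_{\cH_2}[w^*\varphi'(p)wq]$ --- is precisely the content of the proof, and your proposal does not carry it out; asserting that $\varrho_\gamma$ ``is exactly of the dilated form $v^*\varphi_\gamma(\cdot)v$ with $v$ absorbing the weight and the vector states'' is not a construction. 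The paper does this explicitly: after normalising $\cK=\cK_1\otimes\cK_2$ with $\cK_2=\cH_2$ and $\varphi_2=\mathrm{id}$, it expands $v=\sum_{ij}c_{ij}v_{1i}\otimes v_{2j}$ in product orthonormal bases, rewrites $\sigma_\gamma(p\otimes q)$ as a trace against $q$, and reads off $w=\sum_{ij}c^*_{ij}\tilde v_{1i}\otimes v_{20}v^*_{2j}$ mapping $\cH_2$ into $\widetilde{\cK}_1\otimes\cK_2$.

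There is a second, related omission that is not mere bookkeeping. Identifying a normal state with a density matrix via $\sigma(b)=\mathrm{tr}_{\cH_2}[\rho^*_\sigma b]$ introduces a Hermitian adjoint, which complex-conjugates the $\cN_1$-side matrix elements $v^*_{1k}\varphi_1(p)v_{1i}$. This forces the target of $w$ to be the \emph{conjugate} space $\widetilde{\cK}_1\cong\cK_1^*$ and the algebra $\mc{B}(\cK_1)^*$ with reversed composition (Eq.~(\ref{eq: inner product N_1}) in the paper). This reversal is not an artifact: it is exactly the origin of $\cN_1^*$ and $\psi^*_{\cN_1}$ in Def.~\ref{def: time-oriented global sections} and in Lm.~\ref{lm: positivity vs complete positivity}, so a proof that suppresses it produces $\phi_\gamma$ with the wrong variance and breaks the downstream time-orientation argument. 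Minor points: the relevant extension result here is Cor.~1 of \cite{BunceWright1993} (Cor.~2 is what is used in Thm.~\ref{thm: Gleason in context form II}); and your ``secondary subtlety'' about $w^*\Phi_\gamma(a)w$ landing in the right algebra is handled automatically once $w$ is constructed so that $w^*\varphi'(p)w$ is the density matrix of $\widetilde{\varrho}_\gamma(p)\in(\cN_2)_*$.
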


%Before we give the proof, we make a few remarks. First, we note that proving linearity from quasi-linearity is at the heart of the original theorem by Gleason, Thm.~\ref{thm: Gleason theorem}. This is readily extended to composite systems. Earlier proofs of this type have appeared, e.g. in \cite{KlayRandallFoulis1987,Wallach2002}. The following is in part an extension of those to the infinite-dimensional setting. Second, as we have argued before, and as is already evident in the finite-dimensional case \cite{KlayRandallFoulis1987,Wallach2002,BarnumEtAl2010}, this is not sufficient to prove positivity. Third, unlike Thm.~\ref{thm: Gleason theorem} our reformulation is in terms the dilated probabilistic presheaf. This highlights the non-contextual constraints in a Gleason's theorem in the form of marginalisation constraints between contexts.
%Finally, as we will see in the next section, this perspective naturally reveals the difference between the single and composite system case: the ambient $C^*$-algebras, which additionally require a choice of orientation.

\begin{proof}
    Recall that in the proof of Thm.~\ref{thm: linearity vs positivity} we lifted the map $\widetilde{\varrho}_\gamma: \mc{P}(\cN_1) \ra (\cN_2)_*$ to a positive bounded linear map $\widetilde{\phi}_\gamma: \cN_1 \ra (\cN_2)_*$ (cf. \cite{BunceWright1992}). Similarly, for $\gamma \in \Gamma[\PPi_D(\mc{V}(\cN_1) \times \mc{V}(\cN_2))]$ we construct a completely additive measure $\widetilde{\varrho}_\gamma: \mc{P}(\cN_1) \ra (\cN_2)_*$, which by the correspondence between normal states and positive trace-class operators (Thm.~7.1.9 in \cite{KadisonRingroseII}) yields a map $\varrho_\gamma: \mc{P}(\cN_1) \ra (\cN_2)_+$ into positive trace-class operators (viz. density matrices). %\footnote{We view $\cN_2$ in terms of its faithful representation on the Hilbert space $\cH_2$ in its standard form.}
%Moreover, this  the form $\varrho_\gamma = w^*\varphi w$ for $\varphi: \mc{P}(\cN_1) \ra \mc{P}(\mc{K})$ an orthomorphism.}

Furthermore, note that by assumption there exists a Hilbert space $\cK = \cK_1 \otimes \cK_2$, a map $v: \C \ra \cK$ and orthomorphisms $\varphi_1: \mc{P}(\cN_1) \ra \mc{P}(\cK_1)$ and $\varphi_2: \mc{P}(\cN_2) \ra \mc{P}(\cK_2)$ (uniquely defined by $\varphi_1|_{V_1} = \varphi_{V_1}$ and $\varphi_2|_{V_2} = \varphi_{V_2}$ for all $V_1 \in \mc{V}(\cN_1)$, $V_2 \in \mc{V}(\cN_2)$, respectively) such that
\begin{align*}
    %\widetilde{\varrho_\gamma}(p)(q)
    \sigma_\gamma(p\otimes q)
    = v^*(\varphi_1(p) \otimes \varphi_2(q))v
\end{align*}
for all $p \in \mc{P}(\cN_1)$ and $q \in \mc{P}(\cN_2)$. Without loss of generality, we may identify $\cK_2 = \cH_2$ (for $\cN_2 \subset \mc{B}(\cH_2)$) and take $\varphi_2(q) = q$ to be the identity.
%After identifying the image of $\varphi_{V_2}(1_{\cN_2})\cK_2$ with $\cH_2$ (and restricting $v \ra v|_{\varphi_{V_1}(1_{\cN_1}) \otimes \varphi_{V_2}(1_{\cN_2})}$), there exists a unitary $u = 1_{\mc{B}(\cK_1)} \otimes u_2$ such that
%\begin{equation*}
    %\widetilde{\varrho}_\gamma(p)(q)
    %= (uv)^*(\varphi_{V_1}(p) \otimes (u_2\varphi_{V_2}(q)u_2^*))(uv)
    %= (uv)^*(\varphi_{V_1}(p) \otimes q)(uv)\; .
%\end{equation*}}
Then let $v = \sum_{ij} c_{ij} v_{1i} \otimes v_{2j} \in \cK$, where $c_{ij} \in \C$ and $\{v_{1i}\}_i$, $\{v_{2j}\}_j$ are orthonormal bases in $\cK_1$ and $\cK_2$, respectively. %Then,
Since $\sigma_\gamma(p \otimes \cdot)$ corresponds to a normal state on $\cN_2$ for all $p \in \mc{P}(\cN_1)$ (by Thm.~\ref{thm: general Gleason}), we write
\begin{align*}
    %\widetilde{\varrho}_\gamma(p)(q)
    \sigma_\gamma(p\otimes q)
    &= (\sum_{ij} c^*_{ij} v^*_{1i} \otimes v^*_{2j})\ (\varphi_1(p) \otimes q)\ %\varphi_{V_2}(q))
    (\sum_{kl} c_{kl} v_{1k} \otimes v_{2l})\\
    &= \sum_{ijkl} c^*_{ij}c_{kl}\ (v^*_{1i}\varphi_1(p)v_{1k})\ (v^*_{2j}qv_{2l})\\
    &= \mathrm{tr}_{\cH_2}\bigg[\bigg(\sum_{ijkl} c_{ij}c^*_{kl}\ (v^*_{1k}\varphi_1(p)v_{1i})\ v_{2j}v^*_{2l}\bigg)^*q\bigg]\; .
\end{align*}
where we used that $\varphi^*_1(p) = \varphi_1(p)$ for all $p \in \mc{P}(\cN_1)$ in the last step. Note that under the Hermitian adjoint in the last line the inner product on $\cN_1$ is changed to its complex conjugate. To reflect this, let $\mc{B}(\cK_1)^*$ denote the algebra $\mc{B}(\cK_1)$ under the Hermitian adjoint, such that for all pairs $v_{1i},v_{1k} \in \cK_1$, $\tilde{v}_{1i},\tilde{v}_{1k} \in \widetilde{\cK}_1 \cong \cK^*_1$ (the dual of $\cK_1$) and $a \in \mc{B}(\cK_1)$ we have
\begin{equation}\label{eq: inner product N_1}
    (\tilde{v}^*_{1k} a^* \tilde{v}_{1i})_{\mc{B}(\cK_1)^*}
    %:= (v^*_{1i}a^*v_{1k})_{\mc{B}(\cK_1)}
    := (v^*_{1k} a v_{1i})^*_{\mc{B}(\cK_1)}\; .
\end{equation}
Define a bounded linear map $w: \cK_2 \ra \widetilde{\cK}_1 \otimes \cK_2$ by $w = \sum_{ij} c^*_{ij} \tilde{v}_{1i} \otimes v_{20}v^*_{2j}$,
%where $(v^*_{1i})_i$ is the dual basis of the basis $(v_{1i})_i$ in $\cK_1$.
%\begin{align*}
    %(\tilde{v}^*_{1k}\varphi_{V_1}(p)\tilde{v}_{1i})_{\cN^*_1}
    %&= \mathrm{tr}_{\cK^*_1}[\tilde{v}_{1i}\tilde{v}^*_{1k}\varphi_{V_1}(p)]\\
    %&= \mathrm{tr}_{\cK_1}[(\tilde{v}_{1i}\tilde{v}^*_{1k}\varphi_{V_1}(p))^*]\\
    %&= \mathrm{tr}_{\cK_1}[\varphi_{V_1}(p) v_{1k}v^*_{1i}]
    %= (v^*_{1i}\varphi_{V_1}(p)v_{1k})_{\cN_1}\; .
%\end{align*}
%where we used $\varphi^*_{V_1}(p) = \varphi_{V_1}(p)$ for all $p \in \mc{P}(\cN_1)$.
and let $\varphi' = \varphi_1 \otimes 1_{\mc{B}(\cK_2)}$. %= (\varphi_{V_1} \circ *_1) \otimes 1_{\mc{B}(\cK_2)}$.
This yields an orthomorphism $\varphi': \mc{P}(\cN_1) \ra \mc{P}(\cK)$ such that %$\varphi'|_{V_1} = \varphi'_{V_1}$, and
\begin{align*}
    w^*\varphi'(p)w
    &= (\sum_{ij} c_{ij} \tilde{v}^*_{1i} \otimes v_{2j}v^*_{20})\ \varphi'(p)\ (\sum_{kl} c^*_{kl} \tilde{v}_{1k} \otimes v_{20}v^*_{2l})\\
    &= \sum_{ijkl} c_{ij}c^*_{kl}\ (\tilde{v}^*_{1i} \varphi_1(p)\tilde{v}_{1k})_{\mc{B}(\cK_1)^*}\ v_{2j}v^*_{2l}\\
    &= \sum_{ijkl} c_{kl}c^*_{ij}\ (\tilde{v}^*_{1k} \varphi_1(p)\tilde{v}_{1i})_{\mc{B}(\cK_1)^*}\ v_{2l}v^*_{2j} \\
    %&= \sum_{ijkl} c^*_{ij}c_{kl}\ (v^*_{1i}\varphi_{V_1}(p)v_{1k})_{\cN_1}\ v_{2l}v^*_{2j}\\
    &= \left(\sum_{ijkl} c_{ij}c^*_{kl}\ (v^*_{1k}\varphi_1(p)v_{1i})_{\mc{B}(\cK_1)}\ v_{2j}v^*_{2l}\right)^*\; ,
\end{align*}
where we changed labels $i \leftrightarrow k$, $j \leftrightarrow l$ in the third line, and used Eq.~(\ref{eq: inner product N_1}) together with $\varphi^*_1(p) = \varphi_1(p)$ for all $p \in \mc{P}(\cN_1)$ in the last line. Consequently,
\begin{equation*}
    \widetilde{\varrho}_\gamma(p)(q) = \mathrm{tr}_{\cH_2}[w^*\varphi'(p)wq]\; .
\end{equation*}
Finally, by Cor.~1 in \cite{BunceWright1993} the completely additive orthomorphism $\varphi': \mc{P}(\cN_1) \ra \mc{P}(\cK)$ extends to a normal Jordan $*$-homomorphism $\Phi_\gamma: \mc{J}(\cN_1) \ra \mc{B}(\cK)$. %cf. Prop.~III.2.2.2 in \cite{Blackadar_OperatorAlgebras} and \cite{Kadison1951}
%Similarly, it is easy to see that every normal linear functional $\sigma \in (\cN_1 \bar{\otimes} \cN_2)_*$ in Eq.~(\ref{eq: decomposable linear functional}) restricts to a global section $\gamma_\sigma = (\sigma|_{\mc{P}(V)})_{\mc{V}(\cN_1) \times \mc{V}(\cN_2)} \in \Gamma[\PPi_D(\mc{V}(\cN_1) \times \mc{V}(\cN_2))]$.

\end{proof}

Recall that a map $\phi: \cN \ra \BH$ is called \emph{decomposable} if it is of the form $\phi = v^* \Phi v$ for $v: \cH \ra \cK$ a bounded linear map and $\Phi: \cN \ra \mc{B}(\cK)$ a Jordan $*$-homomorphism. Such maps satisfy a weaker notion of positivity than complete positivity \cite{Stormer1982}, owing to their close resemblance with completely positive maps $\phi = v^* \Phi v$, for which $\Phi$ is a $C^*$-homomorphism by the classification due to Stinespring \cite{Stinespring1955}.
%\footnote{The theorem also holds for $C^*$-algebras.}
Importantly, by a generalisation of Choi's theorem \cite{Choi1975} (see Lm.~\ref{lm: positivity vs complete positivity} below), completely positive maps $\phi_\gamma$ correspond with positive linear functionals under Eq.~(\ref{eq: decomposable linear functional}). In contrast, linear functionals $\sigma_\gamma$ corresponding to decomposable maps under Eq.~(\ref{eq: decomposable linear functional}) are generally not positive. We infer that for $\sigma_\gamma$ to be a state, $\phi_\gamma$ in Lm.~\ref{lm: global sections to Jordan homos} further needs to lift to a completely positive map, equivalently the Jordan $*$-homomorphism in Lm.~\ref{lm: global sections to Jordan homos} needs to lift to a $C^*$-homomorphism.

\subsection{Dynamical correspondences.}\label{sec: time orientation}

We recall some basic facts about Jordan algebras. A \emph{Jordan algebra} $\cJ$ is a commutative algebra, which satisfies the characteristic equation $(a^2 \circ b) \circ a = a^2 \circ (b \circ a)$ for all $a,b \in \cJ$. $\cJ$ is called a \emph{JB algebra}, if it is also a Banach space such that $||a \circ b|| \leq ||a|| \cdot ||b||$ and $||a^2|| = ||a||^2$ for all $a,b \in \cJ$. Finally, $J$ is called a \emph{JBW algebra} if it is a JB algebra with a (unique) predual. A Jordan homomorphism $\Phi: \cJ_1 \rightarrow \cJ_2$ is a linear map such that $\Phi(a \circ b) = \Phi(a) \circ \Phi(b)$ for all $a,b \in \cJ_1$. For a more details on Jordan algebras, see \cite{McCrimmon_ATasteOfJordanAlgebras}.%\cite{JordanVonNeumannWigner1934, Zelmanov1979}

Note that the self-adjoint part of a von Neumann algebra $\cN$ naturally gives rise to a real JBW-algebra under the symmetrised product $a \circ b = \frac{1}{2}\{a,b\} = \frac{1}{2}(ab + ba)$. We denote this algebra by $\JNsa = (\cN_\mathrm{sa},\circ)$ and by $\JN$ its complexification.\footnote{A JBW algebra isomorphic to a subalgebra of $\JNsa$ is also called a JW algebra.} In this case we speak of Jordan $*$-homomorphisms if $\Phi$ is a Jordan homomorphism and $\Phi^*(a) = \Phi(a^*)$.

In general, $\JN$ does not determine $\cN$ completely, as it lacks compatibility with the antisymmetric part or \textit{commutator} of the associative product in $\cN$, $[a,b] = ab - ba$. In particular, for the Jordan $*$-homomorphism $\Phi$ in Lm.~\ref{lm: global sections to Jordan homos} to lift to a $C^*$-homomorphism it needs to preserve commutators. On the level of JBW algebras, this can be expressed in terms of one-parameter groups of Jordan automorphisms $\R \ni t \mapsto e^{t\delta(a)} \in \mathrm{Aut}(\cJ)$, where $\delta \in \mc{OD}_s(\cJ)$ is called a \emph{skew order derivation}, i.e., a bounded linear map $\delta: \cJ_+ \ra \cJ_+$ acting on the positive cone of $\cJ$ such that $\delta(1) = 0$ (for details, see \cite{AlfsenShultz1998a}). For $\cJ = \JNsa$, these are of the form $\delta_{ia}(b) = \frac{i}{2}[a,b]$ for all $a,b \in \JNsa$. This yields a canonical map $\psi_\cN: \JNsa \ra \mc{OD}_s(\JNsa)$, $\psi_\cN: a \ra \delta_{ia}$, which gives expression to the double role of self-adoint operators as observables and generators of symmetries \cite{AlfsenShultz1998,Baez2020}. In particular, $e^{t\psi_\cN}$ expresses the unitary evolution of the system described by $\JNsa$. We also define the map $\psi^*_{\cN} := * \circ \psi_\cN: a \ra \delta_{-ia}$ for all $a \in \JNsa$ (cf. Prop.~15 in \cite{AlfsenShultz1998a}).
%By comparison with $\psi_\cN$, $e^{t\psi^*_\cN}$ expresses the evolution of the system under time-reversal (see below).

Note that $\psi_\cN(a)(a) = 0$ and $[\delta_a,\delta_b] = - [\psi_\cN(a),\psi_\cN(b)]$ for all $a,b \in \JNsa$, where $\delta_a(b) := a \circ b$. More generally, Alfsen and Shultz define a \emph{dynamical correspondence} to be any map $\psi: \cJ \ra \mc{OD}_s(\cJ)$ satisfying these properties and prove that a JBW algebra $\cJ$ is a JW algebra if and only if $\cJ$ admits a dynamical correspondence.\footnote{We add that, unlike a JBW algebra, a von Neumann algebra is generally not anti-isomorphic to itself \cite{Connes1975}.} In this case, associative products bijectively correspond with dynamical correspondences (Thm.~23 in \cite{AlfsenShultz1998}). We conclude that a von Neumann algebra $\cN$ is determined as the pair $(\JN,\psi_\cN)$ with the canonical dynamical correspondence $\psi_\cN: a \ra \delta_{ia} = \frac{i}{2}[a,\cdot]$ for all $a \in \JNsa$. Moreover, we set $\cN^* := (\JN,\psi^*_\cN)$ for the associative algebra whose order of composition is reversed (under the adjoint map) with respect to the order of composition in $\cN \cong (\JN,\psi_\cN)$.

It follows that for a Jordan $*$-homomorphism $\Phi: \cJ(\cN_1) \ra \cJ(\cN_2)$ to lift to a homomorphism between the respective von Neumann algebras $\cN_1$ and $\cN_2$ is for it to preserve the respective dynamical correspondences $\psi_{\cN_1}$ and $\psi_{\cN_2}$.
%By Kadison's theorem these are given by unitary or anti-unitary conjugation in every factor. Differentiation with respect to $t$ shows that unitary and anti-unitaries are distinguished by the choice of sign.
%\input{AlfsenShultzTheorem}
Comparing with the Jordan $*$-homomorphism $\Phi_\gamma$ in Lm.~\ref{lm: global sections to Jordan homos}, we need to lift the dynamical correspondence $\psi_{\cN_2}$ to (a subalgebra of) $\mc{B}(\cK)$. To do so, note first that the argument reduces to factors in $\mc{N}_2$ \cite{AlfsenShultz1998}, %where $p \in \mc{ZP}(\mc{N}_2)$ is a central projection in $\mc{N}_2$
and further that we may choose $v$ in Def.~\ref{def: dilated probabilistic presheaf} so that it preserves the factor decomposition. With this choice, let $\cN^v_2 \subseteq \mc{B}(\cK)$ be the largest von Neumann algebra which restricts to $\cN_2$ under $v$, i.e., $\cN_2 = v^* \cN_2^v v$. Clearly, $\Phi(\mc{N}_1) \subseteq \cN^v_2 \subseteq \mc{B}(\mc{K})$, and it is easy to see that in this case $\psi_{\cN_2}$ defines a unique dynamical correspondence $\psi'_{\cN_2}$ on $\cN^v_2$ %by pulling back the dynamical correspondence $\psi_{\cN_2}$ along
via the linear embedding $\mc{N}_2 \hookrightarrow \cN^v_2$ induced by $v$. This entitles us to the following key definition (cf. \cite{FrembsDoering2022a}).
%By further pull-back along $\Phi$, $\psi'_{\cN_2}$ moreover defines a dynamical correspondence on $\cJ(\cN_1)_\mathrm{sa}$ \cite{Kadison1951}.
%NB: In the canonical case, where $\cK = \cH_1 \otimes \cH_2$ and $\Phi: \cN_1 \ra \cN_2$ is given by $\Phi(a) = a \otimes 1_{\cN_2}$, $v$ defines a dynamical correspondence on $\cJ(\cN_1)$.

\begin{definition}\label{def: time-oriented global sections}
    Let $\cN_1 = (\cJ(\cN_1),\psi_{\cN_1})$, $\cN_2 = (\cJ(\cN_2),\psi_{\cN_2})$ be von Neumann algebras. A global section of the dilated probabilistic presheaf $\gamma \in \Gamma[\PPi_D(\mc{V}(\cN_1) \times \mc{V}(\cN_2))]$ is called \emph{time-oriented} if the Jordan $*$-homomorphism $\Phi_\gamma$ in Lm.~\ref{lm: global sections to Jordan homos} preserves dynamical correspondences,
    \begin{equation}\label{eq: Jordan to vN condition}
        \Phi_\gamma \circ \psi^*_{\cN_1} = \psi'_{\cN_2} \circ \Phi_\gamma\; ,
        %flag: more precisely, one would have to also insert $v$ from Prop. 1 in the argument of $\psi_{\cN_2}$ - no, just need to argue that the orientations on $\cN_2$ uniquely fix orientations on $\mc{B}(\cK)$
    \end{equation}
    where $\psi'_{\cN_2}$ denotes the dynamical correspondence on $\cN_2^v \subset \mc{B}(\cK)$ uniquely defined by $\psi_{\cN_2}$.
    %where $\Phi_\gamma$ is the Jordan $*$-homomorphism in Lm.~\ref{lm: global sections to Jordan homos}.
\end{definition}

The terminology reflects the fact that the one-parameter groups $\mathbb{R} \ni t \mapsto e^{t\psi_\cN}$ %for every $a \in \JNsa$
express unitary dynamics in quantum theory. Consequently, they give physical meaning to $t$ as a time parameter. In particular, $\psi_\cN: a \ra \delta_{ia}$ fixes the (canonical) forward time direction of the system described by $\cN$, whereas $\psi^*_{\cN} := * \circ \psi_\cN: a \ra \delta_{-ia}$
%(cf. Prop.~15 in \cite{AlfsenShultz1998a})
describes the system under time reversal \cite{Doering2014} (see also \cite{FrembsDoering2022a,Frembs2022a}). We emphasise that the appearance of $\psi^*_1$ and $\psi_2$ in Def.~\ref{def: time-oriented global sections} is a consequence of interpreting the normal linear functional $\sigma_\gamma: \cN_1 \bar{\otimes} \cN_2 \ra \C$ in Lm.~\ref{lm: global sections to Jordan homos} as a linear map $\phi_\gamma: \cN_1 \ra \cN^*_2$ (under the identification between trace-class operators and normal linear functionals), equivalently $\phi_\gamma: \cN^*_1 \ra \cN_2$,
%NB: The extension of $\phi_\gamma$ from trace-class operators to all of $\cN_1$ follows by Arveson's theorem (cf. Thm.~II.6.9.12 in \cite{Blackadar_OperatorAlgebras}.
where the order of composition in $\cN^*_1$ is naturally reversed (under the adjoint map) with respect to the order of composition in $\cN_1$.

Once again, we show that Def.~\ref{def: time-oriented global sections} poses no additional constraint in the single system case.
%There is two ways of seeing this: (i) consider $\C$ as a von Neumann algebra; (ii) consider the GNS construction.

\begin{theorem}\label{thm: Gleason in context form III}
    \textbf{\emph{(Gleason in contextual form (III))}} Let $\cN = (\cJ(\cN),\psi_\cN)$ be a von Neumann algebra with no summand of type $\text{I}_2$. There is a bijective correspondence between (normal) states on $\cN$ and time-oriented global sections of the (normal) dilated probabilistic presheaf $\PPi_D$ of $\cN$ over $\VN$.
\end{theorem}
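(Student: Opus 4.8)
The plan is to leverage the single-system versions already proven (Thms.~\ref{thm: Gleason in context form} and~\ref{thm: Gleason in context form II}) and show that the time-orientation condition of Def.~\ref{def: time-oriented global sections} is automatically satisfied when there is only one factor, so it cuts nothing away. First I would recall that, by Thm.~\ref{thm: Gleason in context form II}, global sections $\gamma$ of the dilated probabilistic presheaf $\PPi_D$ over $\VN$ already correspond bijectively with (normal) states $\sigma_\gamma \in \SN$, via a decomposition $\sigma_\gamma = v^*\Phi_\gamma v$ with $\Phi_\gamma: \cN \ra \mc{B}(\cK)$ a normal Jordan $*$-homomorphism and $\varphi_\gamma = \Phi_\gamma|_{\PN}$ the associated orthomorphism. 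So the only thing to check is that every such $\gamma$ is in fact time-oriented, i.e.\ that Eq.~(\ref{eq: Jordan to vN condition}) holds automatically, and conversely that time-orientation does not obstruct the state-to-section direction (which is trivial, since a genuine state gives a genuine GNS representation, whose restriction to the Jordan structure manifestly intertwines the canonical dynamical correspondences).

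For the forward direction the key observation is that in the single-system setting the ``$\cN_1$'' and ``$\cN_2$'' of Def.~\ref{def: time-oriented global sections} coincide with $\cN$ itself — there is no tensor factorisation — so the relevant map $\phi_\gamma: \cN \ra \cN$ is, up to the identification of normal functionals with trace-class operators, essentially the identity (or at worst a normal Jordan endomorphism induced by the GNS data). Concretely, I would argue as follows: starting from a normal state $\sigma$, the GNS representation $\pi: \cN \ra \mc{B}(\cK)$ is an honest $*$-homomorphism of von Neumann algebras, hence preserves commutators, hence preserves the canonical dynamical correspondences, $\pi \circ \psi_\cN = \psi'_\cN \circ \pi$ where $\psi'_\cN$ is the dynamical correspondence on $\pi(\cN)^{v} \subseteq \mc{B}(\cK)$ induced by $v$ as in the paragraph preceding Def.~\ref{def: time-oriented global sections}. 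The appearance of $\psi^*$ rather than $\psi$ on the domain side in Eq.~(\ref{eq: Jordan to vN condition}) is purely an artefact of reading $\sigma_\gamma$ as a map into $\cN^*_2$ (the opposite algebra under the adjoint), so for a single system the condition reads $\Phi_\gamma \circ \psi^*_\cN = \psi'_\cN \circ \Phi_\gamma$, which holds because $\Phi_\gamma$ comes from a genuine representation and the adjoint simply swaps $\psi_\cN \leftrightarrow \psi^*_\cN$ consistently on both sides; I would spell this out using $\delta_{ia}(b) = \tfrac{i}{2}[a,b]$ and $\psi^*_\cN: a \mapsto \delta_{-ia}$ together with $\Phi_\gamma([a,b]) = [\Phi_\gamma(a),\Phi_\gamma(b)]$, which is exactly commutator-preservation of a $*$-homomorphism. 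Conversely, given a time-oriented $\gamma$, Thm.~\ref{thm: Gleason in context form II} already produces the state $\sigma_\gamma$; time-orientation is extra information that is consistent with, but not needed for, this direction.

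The main obstacle I anticipate is bookkeeping rather than mathematical depth: one has to be careful about what ``$\psi'_{\cN_2}$'' means when $\cN_2 = \cN$ and $\cK$ may be strictly larger than the GNS space of a pure-ish piece — i.e.\ justifying that the dynamical correspondence on the ambient $\cN^v \subseteq \mc{B}(\cK)$ restricts correctly along the embedding induced by $v$, and that the reduction to factors invoked before Def.~\ref{def: time-oriented global sections} is harmless here. I would handle this by noting that $\Phi_\gamma$ lands in $\cN^v$ by construction, that $v^*\cN^v v = \cN$, and that the dynamical correspondence is transported along this normal embedding by Thm.~23 of \cite{AlfsenShultz1998} (associative products correspond bijectively with dynamical correspondences), so the diagram in Eq.~(\ref{eq: Jordan to vN condition}) commutes iff $\Phi_\gamma$ preserves the associative product, which it does precisely because it arose (via Cor.~2 in \cite{BunceWright1993}) from the GNS representation of an actual state. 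Hence the bijection of Thm.~\ref{thm: Gleason in context form II} descends to a bijection between normal states and time-oriented global sections, which is the claim.
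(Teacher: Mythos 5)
Your proposal follows essentially the same route as the paper: reduce to Thm.~\ref{thm: Gleason in context form II} and observe that the time-orientation condition is automatic in the single-system case because the GNS representation is a genuine $*$-homomorphism (equivalently, every positive linear functional is completely positive by Stinespring), hence preserves commutators and therefore the dynamical correspondences. One small correction: the paper reads the single system inside Def.~\ref{def: time-oriented global sections} by setting $\cN_1 = \cN$ and $\cN_2 = \C$ --- so $\phi_\gamma$ is the state itself rather than a near-identity endomorphism $\cN \ra \cN$ as you suggest --- but with that identification your commutator-preservation argument is exactly the paper's.
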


\begin{proof}
    By Thm.~\ref{thm: Gleason in context form II}, states on $\cN$ correspond with global sections of $\PPi_D(\VN)$. It is thus sufficient to show that every state is already time-oriented.
    %independent of a global choice of dynamical correspondence $\psi$ on $\JNsa$}.
    This follows from the fact that every positive linear functional is automatically completely positive \cite{Stinespring1955}. More precisely, recall that the GNS construction for $\sigma \in \SN$ trivially yields a dilation of the form in Lm.~\ref{lm: global sections to Jordan homos} (for $\cN_2 = \C$), namely $\sigma(a) = v^*\pi(a)v = \mathrm{tr}_{\cK}[vv^*\pi(a)] = \mathrm{tr}_{\cK}[(vv^*)^*\pi(a)]$ for all $a \in \cN$, where $v: \C \ra \mc{K}$ and $\pi: \cN \ra \mc{B}(\cK)$ is a representation of $\cN$.\footnote{Note that this construction works for all states $\sigma \in \mc{S}(\cN)$, not just for normal states.}
    In particular, $\sigma$ is time-oriented with respect to the algebras $\cN_1 = \cN$ and $\cN_2 = \C$.
    %In particular, it follows that $\sigma|_{(\cN_1 \bar{\otimes} \cN_2)_\mathrm{sa}}$ is invariant under a change of time-orientation.
\end{proof}

\subsection{Gleason's theorem for composite systems.}\label{sec: main result}

The refinements of Thm.~\ref{thm: general Gleason} in Thm.~\ref{thm: Gleason in context form II} and Thm.~\ref{thm: Gleason in context form III} both reveal extra structure in the classification of states in the single system case. In turn, having identified this additional structure in Def.~\ref{def: dilated probabilistic presheaf} and Def.~\ref{def: time-oriented global sections}, we may impose it in order to generalise Gleason's theorem to composite systems. Then, every $\gamma \in \Gamma[\PPi_D(\mc{V}(\cN_1) \times \mc{V}(\cN_2))]$ yields a map $\phi_\gamma(a) = v^*\Phi_\gamma(a)v$ with $\Phi_\gamma$ a $C^*$-homomorphism, which implies that $\phi_\gamma$ is completely positive by Stinespring's theorem. In the final step, we deduce from this the existence of a positive linear functional $\sigma_\gamma: \cN \ra \C$ with $\cN = \cN_1 \bar{\otimes} \cN_2$. In finite dimensions, this follows from Choi's theorem \cite{Choi1975}. However, the latter hinges on the existence of a finite trace. For the general case, we use the following generalisation based on a result by Belavkin and Staszewski \cite{Belavkin1986}.\footnote{If $\phi: \cN \ra \cN$, where $\cN$ has a cyclic and separating vector, the result also follows from \cite{Stormer2014}.} %see Thm.~9; see also \cite{Holevo2011}
%Alternatively, one may reduce the general case to Choi's theorem, using that finite-rank operators are dense in the ideal of trace-class (and Hilbert-Schmidt) operators.

\begin{lemma}\label{lm: positivity vs complete positivity}
    The normal linear functional $\sigma_\gamma(a \otimes b) = \widetilde{\phi}_\gamma(a)(b)$ in Lm.~\ref{lm: global sections to Jordan homos} is positive if and only if the %normal
    linear map $\phi_\gamma: \cN^*_1 = (\mc{J}(\cN_1),\psi^*_{\cN_1}) \ra \cN_2 = (\mc{J}(\cN_2),\psi_{\cN_2})$ is completely positive.
    %Let $\phi: \cN_1 \ra \cN_2$ be a normal, completely positive map such that $\sigma(a \otimes b) = \widetilde{\phi}(a)(b)$ for all $a \in \cN_1$, $a_2 \in \cN_2$. Then $\sigma$ is a positive normal linear functional on $\cN = \cN_1 \bar{\otimes} \cN_2$.
\end{lemma}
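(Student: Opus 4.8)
The plan is to identify the normal linear functional $\sigma_\gamma$ on $\cN_1\bar\otimes\cN_2$ with a linear map $\phi_\gamma\colon\cN_1\ra(\cN_2)_*\cong\cN^*_2$ (up to the trace identification) via $\sigma_\gamma(a\otimes b)=\mathrm{tr}_{\cH_2}[\phi_\gamma(a)^* b]$, and to show that positivity of the functional is equivalent to complete positivity of this map. The forward direction is the classical Choi argument recast: a completely positive $\phi_\gamma\colon\cN^*_1\ra\cN_2$ has, via the Belavkin-Staszewski/Stinespring-type dilation, a Kraus/dilation form $\phi_\gamma(a)=w^*\Phi_\gamma(a)w$ with $\Phi_\gamma$ a $C^*$-homomorphism, from which $\sigma_\gamma(a\otimes b)=\mathrm{tr}[w^* \Phi_\gamma(a)^* w\, b]$ (with suitable adjoint bookkeeping) is manifestly positive on positive elementary tensors and extends by normality and the density of finite sums of product positive elements in $(\cN_1\bar\otimes\cN_2)_+$ (Prop.~11.2.8 in \cite{KadisonRingroseII}) to a positive functional on all of $\cN$.

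For the converse, I would suppose $\sigma_\gamma$ is positive and seek to recover complete positivity of $\phi_\gamma\colon\cN^*_1\ra\cN_2$. Concretely, fix $n$ and $a_{ij}\in\cN_1$ with $(a_{ij})\geq 0$ in $M_n(\cN_1)$, and $b_{ij}\in\cN_2$ with $(b_{ij})\geq 0$ in $M_n(\cN_2)$; I need $\sum_{ij}\mathrm{tr}_{\cH_2}[\phi_\gamma(a_{ij})^* b_{ij}]\geq 0$, i.e. that $(\phi_\gamma(a_{ij}))$ is a positive block matrix. The key move is that a positive element of $M_n(\cN_1)$ can be written (using the $C^*$-algebra structure) as a sum of terms $c^* c$ with $c\in M_{1,n}(\cN_1)$, equivalently as $\sum_k x_k\otimes \bar x_k$-type dyads after passing to $\cN_1\otimes M_n(\C)$; pairing with the positive functional $\sigma_\gamma$ (extended to $\cN_1\bar\otimes\cN_2$, and then ampliated to $(\cN_1\otimes M_n)\bar\otimes(\cN_2\otimes M_n)$ via $\sigma_\gamma\otimes\mathrm{tr}_{M_n}$) yields exactly the required inequality, because the ampliation of a positive normal functional by a trace remains positive. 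This is precisely where the result of Belavkin and Staszewski \cite{Belavkin1986} enters: it supplies the operator-valued analogue of Choi's identity in the general von Neumann setting, where no finite trace on $\cN_1$ is available, by giving the correspondence between positive $\cN_2$-valued maps on $\cN_1^*$ and positive functionals on the tensor product in a way compatible with normality.

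The one subtlety I would be careful about is the reversal of composition order — the appearance of $\cN^*_1=(\mc{J}(\cN_1),\psi^*_{\cN_1})$ rather than $\cN_1$. This is forced by the adjoint taken when identifying a normal state with its density matrix (as already displayed in the computation in the proof of Lm.~\ref{lm: global sections to Jordan homos}, where the inner product on $\cN_1$ passes to its complex conjugate under the Hermitian adjoint). Complete positivity must therefore be read as a statement about the map $\phi_\gamma$ from the \emph{opposite-oriented} algebra $\cN^*_1$ into $\cN_2$; equivalently, $\phi_\gamma\colon\cN_1\ra\cN^*_2$ is completely positive, the two formulations being interchanged by taking adjoints on both sides. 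I would verify that the density/normality extension arguments are insensitive to this relabelling, since the Jordan structure is unchanged and only the commutator (hence $\psi$ versus $\psi^*$) is reversed, which does not affect the order-theoretic (positivity) content at the level of functionals.

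The main obstacle, and the step I expect to require the most care, is the converse direction in the general (non-finite, no-trace) von Neumann setting: turning positivity of $\sigma_\gamma$ into \emph{complete} positivity of $\phi_\gamma$ without access to the maximally entangled state that powers Choi's original proof. Here one genuinely needs the Belavkin-Staszewski machinery (or, when $\cN_2=\cN_1$ has a cyclic separating vector, the alternative via \cite{Stormer2014}) to replace the Choi matrix; checking that the normality of $\sigma_\gamma$ — established in Lm.~\ref{lm: global sections to Jordan homos} — transfers to the $\sigma$-weak continuity needed to invoke that result is the delicate point.
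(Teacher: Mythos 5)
The genuine gap is in your forward direction (complete positivity of $\phi_\gamma$ implies positivity of $\sigma_\gamma$). You argue that $\sigma_\gamma$ is nonnegative on positive elementary tensors and then extend to all of $(\cN_1\bar{\otimes}\cN_2)_+$ by ``density of finite sums of product positive elements'' in that cone, citing Prop.~11.2.8 of Kadison--Ringrose. That density claim is false: the norm closure of the cone generated by $a\otimes b$ with $a,b\geq 0$ is the separable cone, which is strictly smaller than the positive cone whenever both algebras are noncommutative (a projection onto an entangled vector in $\C^2\otimes\C^2$ is positive but not in that closure). Prop.~11.2.8 asserts norm density of the \emph{linear span} of normal product states in the predual, which carries no positivity information. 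Indeed, if positivity on product positives implied positivity, Thm.~\ref{thm: linearity vs positivity} would already classify states and Lm.~\ref{lm: positivity vs complete positivity} would be vacuous --- the gap you leave open is precisely the one this lemma exists to close. To repair the step you must use complete positivity of $\phi_\gamma$ on non-product elements: either compute $\sigma_\gamma(x^*x)=\sum_{i,j}\widetilde{\phi}_\gamma(a_i^*a_j)(b_i^*b_j)\geq 0$ for $x=\sum_i a_i\otimes b_i$ directly from the complete positivity of $\phi_\gamma$ (modulo the transpose bookkeeping you flag) and then extend by normality from the algebraic to the spatial tensor product; or follow the paper, which constructs a normal completely positive reference map $\phi_0=v_0^*\pi_0 v_0$ from a faithful normal semifinite weight $\omega_1$, shows $\phi_\gamma$ is strongly completely absolutely continuous with respect to $\phi_0$, and invokes Thm.~2 of Belavkin--Staszewski to produce a positive self-adjoint density $\rho_\gamma$ with $\phi_\gamma=v_0^*\rho_\gamma\pi_0 v_0$, whence $\sigma_\gamma=\omega(\rho_\gamma\,\cdot\,)\geq 0$. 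Note that the paper needs Belavkin--Staszewski exactly here, in the forward direction, not in the converse where you place it.

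Your converse direction (positivity of $\sigma_\gamma$ implies complete positivity of $\phi_\gamma$) is essentially sound and takes a more elementary route than the paper: decomposing a positive $[a_{ij}]\in M_n(\cN_1)$ as $\sum_k [c_{ki}^*c_{kj}]$ and evaluating $\sigma_\gamma$ on $\bigl(\sum_i c_{ki}\otimes b_i\bigr)^*\bigl(\sum_j c_{kj}\otimes b_j\bigr)$ is the standard Choi-type duality and requires no Radon--Nikodym machinery, only care with the adjoint that produces $\cN_1^*$ in place of $\cN_1$. The paper instead represents the positive normal functional by a positive density relative to the product weight and reads off complete positivity of $\phi_\gamma=v_0^*\rho_\gamma\pi_0 v_0$ from that of $\phi_0$; your version is arguably cleaner for this half, but it does not compensate for the broken forward implication.
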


\begin{proof}
By Lm.~\ref{lm: global sections to Jordan homos}, $\phi_\gamma: \cN^*_1 \ra \cN_2$ is a decomposable map (between Jordan algebras $\mc{J}(\cN_1)$ and $\mc{J}(\cN_2)$) \cite{Stormer1982}. Assume that $\phi_\gamma$ is moreover completely positive.
%NB: suggested from the proof of Lm.~\ref{lm: global sections to Jordan homos}
Fix faithful, normal, semi-finite weights $\omega_1: \mc{N}_1 \rightarrow \mathbb{C}$
%and consider the corresponding representation $\pi_{\omega_1}: \mc{N}_1 \rightarrow \mc{B}(\mc{H}_1)$, i.e., $\omega_1 = v_1^*\pi_{\omega_1} v_1$.
and $\omega_2: \mc{N}_2 \rightarrow \mathbb{C}$ with respective faithful GNS representations $\pi_{\omega_1}: \cN_1 \ra \mc{B}(\cH_1)$ and $\pi_{\omega_2}: \cN_2 \ra \mc{B}(\cH_2)$. In the product representation $\pi_\omega: \cN_1 \bar{\otimes} \cN_2 \ra \mc{B}(\cH_1\otimes\cH_2)$ induced by the weight $\omega = \omega_1 \bar{\otimes} \omega_2$ we have $\sigma_\gamma(a\otimes b) = \omega(\rho^*_\gamma(a\otimes b))$ for all $a \in \cN_1$, $b \in \cN_2$ and some trace-class operator $\rho_\gamma \in \cN_1 \bar{\otimes} \cN_2 \subset \mc{B}(\cH_1\otimes\cH_2)$.\footnote{In other words, we again identify $\cN_1 \subset \mc{B}(\cH_1)$ and $\cN_2 \subset \mc{B}(\cH_1)$ with their standard forms \cite{Haagerup1975}.} We show that $\rho_\gamma$ is %bounded,
positive and self-adjoint.

Define a normal, completely positive reference map $\phi_0: \cN^*_1 \ra \cN_2 \subset \mc{B}(\cH_2)$ by $\phi_0 = \omega^*_1 1_{\mc{B}(\cH_2)} %= (v_1^*\pi_{\omega_1} v_1)1_{\cN_2}
= v_0^* \pi_0 v_0$, where $v_0 = v_1 \otimes 1_{\mc{B}(\cH_2)}: \cH_2 \ra \cH_1 \otimes \cH_2$ for $v_1: \C \ra \cH_1$ in $\omega^*_1 = (v^*_1\pi_{\omega_1}v_1)^* = v^*_1 \pi^*_{\omega_1} v_1 = v^*_1 (\pi_{\omega_1} \circ *) v_1$ (cf. \cite{Stinespring1955})
%flag: result holds for states, but is easily extended to weights?
and $\pi_0 = (\pi_{\omega_1} \circ *) \otimes 1_{\mc{B}(\cH_2)}$. %= (\pi_{\omega_1} \circ *) \otimes 1_{\mc{B}(\cH_2)}$. %%such that $\omega_1 = v_0^* \pi_1 v_0$
%We write this as $\phi_0 = v_0^* \pi_0 v_0 = ((v^1_0)^* \pi^1_0 v^1_0)\ ((v^2_0)^* \pi^2_0 v^2_0)$, where $\pi_0 = \pi^1_0 \bar{\otimes} \pi^2_0$ and $v_0 = v^1_0 \bar{\otimes} v^2_0$ such that $\omega_1 =(v^1_0)^*\pi^1_0 v^1_0$ (with $\pi^1_0 = \mathrm{id}_{\cN_1}$ and $v^1_0$ the vector corresponding to $\omega_1$), and $\pi^2_0: \cN_2 \ra \mc{B}(\cK)$ is an embedding with $v^2_0: \cH_2 \ra \cK$.}
%Note that both $\omega_1$ and $\mathrm{id}_{\cN_2}$ are completely positive.
%Here, $\cH_2$ is the Hilbert space in the faithful representation $\pi_{\omega_2}: \cN_2 \ra \mc{B}(\cH_2)$.}
Recall that complete positivity of $\phi_0$ means $\sum_{i,j=1}^n (\phi_0(a_{ij})\eta_i,\eta_j) \geq 0$ for all $\eta_i,\eta_j \in \mc{H}_2$, $n \in \mathbb{N}$ whenever $a_{ij} \in M_n(\cN_1) \cong M_n(\cN^*_1)_+$. Let $(a_{ij})_m \in \mc{N}^*_1$ be a family of sequences such that
\begin{equation*}
    \mathrm{lim}_{m \rightarrow \infty} \sum_{i,j=1}^n (\phi_0((a_{ij})_m)\eta_i,\eta_j) = \sum_{i,j=1}^n (\eta_i,\eta_j)\ \mathrm{lim}_{m \rightarrow \infty} \omega_1((a_{ij})_m) = 0\; .
\end{equation*}
Since $\eta_i,\eta_j$ are arbitrary and $\omega_1$ is a faithful, normal weight, we conclude $\mathrm{lim}_{m \rightarrow \infty} (a_{ij})_m = 0$ in the ultraweak topology.
%(More precisely, let $a_{ij} \in M_n(\cN_1)_+$ be a non-negative matrix. Then $\sum_{i,j=1}^n (\phi_0(a_{ij})\eta_i,\eta_j) = \sum_{i,j=1}^n ( \pi_0(a_{ij})v_0\eta_i,v_0\eta_j) = 0$ for all $\eta_i \in \mc{H}_2$ implies $\pi_0(a_{ij}) = 0$, since $\pi_0$ is faithful, which implies $a_{ij} = 0$, since $\omega_1$ is faithful.)
Since $\phi_\gamma$ is also a completely positive and normal map, we have $\mathrm{lim}_{m \rightarrow \infty} \sum_{i,j=1}^n (\phi_\gamma((a_{ij})_m)\eta_i,\eta_j) = 0$ for all $\eta_i, \eta_j \in \mc{H}_2$ (cf. Prop.~III.2.2.2 in \cite{Blackadar_OperatorAlgebras}). In the terminology of \cite{Belavkin1986}, $\phi_\gamma$ is thus strongly completely absolutely continuous with respect to $\phi_0$. %Moreover, since $\phi_\gamma$ %%and $\phi_0$ are
%is bounded it is also completely dominated by $\phi_0$.
Consequently, Thm.~2 in \cite{Belavkin1986} asserts the existence of a %bounded,
positive, self-adjoint operator $\rho_\gamma$ %= \sum_k c_k \left( \rho^k_{1,\phi_\gamma} \bar{\otimes} \rho^k_{2,\phi_\gamma} \right)$
such that $\phi_\gamma = v_0^* \rho_\gamma \pi_0 v_0$. %=(\rho_\gamma^{1/2}v_0)^*\pi_0(a)(\rho_\gamma^{1/2}v_0)$
It follows that $\sigma_\gamma(a\otimes b) = \omega(\rho_\gamma(a\otimes b))$ for all $a \in \cN_1$, $b \in \cN_2$ defines a positive and thus normal state on $\cN = \cN_1 \bar{\otimes} \cN_2$.
%Finally, by the correspondence between normal states %$\sigma_2 = \mathrm{tr}_{\cH_2}[\rho_{\sigma_2} \cdot] \in (\cN_2)_*$
%and positive trace-class operators %$\rho_{\sigma+2} \in \cN_2$,
%we have that for all $a \in \cN_1$ and $b \in \cN_2$
%\begin{equation*}
    %\sigma_\gamma(a \otimes b)
    %= \mathrm{tr}_{\cH_2}[(v_0^*\rho_\gamma \pi_0(a)v_0)^* b]
    %= \widetilde{\phi}_\gamma(a)(b)
    %= \mathrm{tr}_{\cH_2}[\phi^*_\gamma(a)(b)]
    %%= \mathrm{tr}_{\cH_2}[((\rho_\gamma^{1/2}v_0)^*\pi_0(a)(\rho_\gamma^{1/2}v_0))^* b]
    %= \mathrm{tr}_{\cH_2}[(\rho_\gamma^{1/2}v_0)^*\pi^*_0(a)(\rho_\gamma^{1/2}v_0) b]
    %= \mathrm{tr}_{\cH_1\cH_2}[(v_0v_0^* \otimes 1_{\cN_2})\rho_\gamma (\pi_0(a) \otimes b)]
    %\; .
    %%\omega_2(\phi_\gamma(a)\ b) = \omega_2((v_0^* \rho_\gamma \pi_0(a) v_0)\ b) = \omega_{1\& 2}(\rho_\gamma (a\bar{\otimes}b))\; %,
%\end{equation*}

Conversely, if $\sigma_\gamma$ is positive, there exists a positive, self-adjoint operator $\rho_\gamma$ of trace-class such that $\sigma(a \otimes b) = \omega(\rho_\gamma(a \otimes b))$ for all $a \in \cN_1$, $b \in \cN_2$, hence, $\phi_\gamma = v_0^*\rho_\gamma\pi_0 v_0$. %is completely dominated by $\phi_0$.
In particular, $\phi_\gamma: \cN^*_1 \ra \cN_2$ is completely positive by complete positivity of $\phi_0$.
\end{proof}

We finally arrive at a generalisation of Gleason's theorem to composite systems.

\begin{theorem}\label{thm: composite Gleason}
     Let $\cN_1 = (\cJ(\cN_1),\psi_{\cN_1})$, $\cN_2 = (\cJ(\cN_2),\psi_{\cN_2})$ be von Neumann algebras with no summands of type $I_2$ and let $\cN = \cN_1 \bar{\otimes} \cN_2$. Then there is a bijective correspondence between the set of normal states on $\cN$ and the set of time-oriented global sections of the normal dilated probabilistic presheaf $\PPi_D$ over $\mc{V}(\cN_1) \times \mc{V}(\cN_2)$.
\end{theorem}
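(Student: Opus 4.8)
The plan is to obtain Thm.~\ref{thm: composite Gleason} by concatenating the refinements established above and then checking that the two resulting assignments between time-oriented global sections and normal states are mutually inverse. For the first direction, let $\gamma \in \Gamma[\PPi_D(\mc{V}(\cN_1) \times \mc{V}(\cN_2))]$ be time-oriented. By Lm.~\ref{lm: global sections to Jordan homos}, $\gamma$ extends uniquely to a normal linear functional $\sigma_\gamma \in \cN_*$ with $\sigma_\gamma(a \otimes b) = \widetilde{\phi}_\gamma(a)(b)$, where $\phi_\gamma = w^* \Phi_\gamma w$ for a normal Jordan $*$-homomorphism $\Phi_\gamma: \cN^*_1 \ra \cN^v_2 \subseteq \mc{B}(\cK)$. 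By Def.~\ref{def: time-oriented global sections}, $\Phi_\gamma$ intertwines $\psi^*_{\cN_1}$ with the induced dynamical correspondence $\psi'_{\cN_2}$ on $\cN^v_2$; hence, by the Alfsen--Shultz bijection between associative products and dynamical correspondences (Thm.~23 in \cite{AlfsenShultz1998}, applied factor-by-factor as set up in Sec.~\ref{sec: time orientation}), $\Phi_\gamma$ lifts to a genuine $C^*$-homomorphism $\cN^*_1 \ra \cN^v_2$. Consequently $\phi_\gamma = w^*\Phi_\gamma w: \cN^*_1 \ra \cN_2$ is a Stinespring dilation \cite{Stinespring1955}, hence completely positive, and Lm.~\ref{lm: positivity vs complete positivity} gives positivity of $\sigma_\gamma$. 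Since $\sigma_\gamma(1) = \gamma(1) = 1$, the functional $\sigma_\gamma$ is a normal state on $\cN$.

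For the converse, let $\sigma \in \cN_*$ be a normal state. Identifying normal states on $\cN_2$ with density matrices (Thm.~7.1.9 in \cite{KadisonRingroseII}), $\sigma$ induces a map $\phi: \cN^*_1 \ra \cN_2$ with $\phi(a)(b)$ fixed by $\sigma(a \otimes b)$, and $\phi$ is completely positive by Lm.~\ref{lm: positivity vs complete positivity}. By Stinespring's theorem, $\phi$ admits a normal dilation $\phi = w^*\Phi w$ with $\Phi: \cN^*_1 \ra \mc{B}(\cK_1)$ a normal $*$-representation; realising $\cN_2 \subset \mc{B}(\cH_2)$ in standard form and setting $\cK = \cK_1 \otimes \cH_2$, restriction to projections yields an orthomorphism $\varphi_1 := \Phi|_{\mc{P}(\cN_1)}$ and a vector $v \in \cK$ with $\sigma|_{\mc{P}(V)} = v^*(\varphi_1|_{V_1} \bar{\otimes} \varphi_{V_2}) v$ for every product context $V = V_1 \bar{\otimes} V_2$, where $\varphi_{V_2}: \mc{P}(V_2) \hra \mc{P}(\cH_2)$ is the tautological embedding; this is the computation in the proof of Lm.~\ref{lm: global sections to Jordan homos} read backwards. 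Together with the evident compatibility under restriction, this exhibits $\gamma_\sigma := (\sigma|_{\mc{P}(V)})_V$ as a global section of $\PPi_D$ over $\mc{V}(\cN_1) \times \mc{V}(\cN_2)$. Since a $C^*$-homomorphism preserves commutators --- hence, after reversing the product on $\cN_1$, intertwines $\psi^*_{\cN_1}$ with $\psi'_{\cN_2}$ --- and since, by uniqueness of the normal Jordan $*$-homomorphic extension of $\varphi_1$ (Cor.~1 in \cite{BunceWright1993}), the Jordan $*$-homomorphism attached to $\gamma_\sigma$ by Lm.~\ref{lm: global sections to Jordan homos} coincides with $\Phi$, the section $\gamma_\sigma$ is time-oriented.

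Finally, the two assignments are mutually inverse, hence bijections: $\gamma_{\sigma_\gamma} = \gamma$ because both induce the measures $\mu_V(p,q) = \sigma_\gamma(p \otimes q)$ on every product context and global sections of $\PPi_D$ are identified by the measures they induce; conversely $\sigma_{\gamma_\sigma} = \sigma$ because both are normal functionals agreeing on all $p \otimes q$, whose linear span is weak-$*$ dense in $\cN = \cN_1 \bar{\otimes} \cN_2$. Injectivity in either direction follows from the same two observations. I expect the converse direction to be the main obstacle: one has to arrange the Stinespring dilation of an arbitrary normal state so that its embeddings respect the spatial tensor-product structure demanded by Def.~\ref{def: dilated probabilistic presheaf} (i.e., $\varphi_V = \varphi_{V_1} \bar{\otimes} \varphi_{V_2}$ with $\cK = \cK_1 \otimes \cK_2$) and so that it reproduces precisely the data constructed in Lm.~\ref{lm: global sections to Jordan homos}; the forward direction, by contrast, is a direct concatenation of the preceding lemmas.
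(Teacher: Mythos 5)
Your proof is correct and follows essentially the same route as the paper: Lemma~\ref{lm: global sections to Jordan homos} plus time-orientation plus Stinespring gives complete positivity of $\phi_\gamma$, and Lemma~\ref{lm: positivity vs complete positivity} converts this into positivity of $\sigma_\gamma$, with the other direction obtained by running the chain backwards. The only divergence is in passing from a normal state to a time-oriented section: the paper simply invokes Thm.~\ref{thm: Gleason in context form II} to restrict $\sigma$ to a global section of $\PPi_D$ and then cites Lemma~\ref{lm: positivity vs complete positivity} for time-orientation, whereas you rebuild the dilation explicitly from the Stinespring dilation of the associated completely positive map $\phi: \cN^*_1 \ra \cN_2$; your version is slightly longer but has the merit of making explicit that the resulting embeddings satisfy the product condition $\varphi_V = \varphi_{V_1} \bar{\otimes} \varphi_{V_2}$ demanded by Def.~\ref{def: dilated probabilistic presheaf}, a point the paper leaves implicit, and of verifying that the two assignments are mutually inverse.
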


\begin{proof}
    By Thm.~\ref{thm: Gleason in context form II},
%By Thm.~\ref{thm: Gleason in context form III}},
every normal state $\sigma \in (\cN_1 \bar{\otimes} \cN_2)_*$ restricts to a
%time-oriented
global section of the normal dilated probabilistic presheaf, $\gamma_\sigma = (\sigma|_{\mc{P}(V)})_{V \in \mc{V}(\cN_1) \times \mc{V}(\cN_2)} \in \Gamma[\PPi_D(\mc{V}(\cN_1) \times \mc{V}(\cN_2))]$. It is thus sufficient to show that every normal state $\sigma$ can be written in the form $\sigma(a \otimes b) = \widetilde{\phi}(a)(b)$ for all $a \in \cN_1$, $b \in \cN_2$ and $\phi: \cN^*_1 \ra \cN_2$ completely positive, which follows from Lm.~\ref{lm: positivity vs complete positivity}.
%Alternatively, the time-orientation in Thm.~\ref{thm: Gleason in context form III} by currying induces a unique relative time-orientation between subsystems, i.e., $\sigma$ is time-oriented with respect to $\psi^*_1$ and $\psi_2$

Conversely, let $\gamma \in \Gamma[\PPi_D(\mc{V}(\cN_1) \times \mc{V}(\cN_2))]$. By Lm.~\ref{lm: global sections to Jordan homos}, $\gamma$ corresponds to a normal linear functional of the form $\sigma_\gamma(a \otimes b) = \widetilde{\phi}_\gamma(a)(b)$ for all $a \in \cN_1$, $b \in \cN_2$ and $\phi_\gamma$ a decomposable map (between the respective Jordan algebras $\mc{J}(\cN_1)$ and $\mc{J}(\cN_2)$), i.e., $\phi_\gamma(a) = w^*\Phi_\gamma(a)w$ for $w: \cH_2 \ra \cK$ with $\cN_2 \subset \mc{B}(\cH_2)$ and $\Phi_\gamma: \mc{J}(\cN_1) \ra \mc{J}(\mc{B}(\cK))$ a Jordan $*$-homomorphism. Finally, since $\gamma$ is time-oriented with respect to the dynamical correspondences $\psi^*_{\cN_1} = * \circ \psi_{\cN_1}$ and $\psi_{\cN_2}$, $\Phi_\gamma: \cN^*_1 \ra \cN^v_2 \subset \mc{B}(\cK)$ is also a $C^*$-homomorphism, and $\phi_\gamma$ thus completely positive by Stinespring's theorem \cite{Stinespring1955} (cf. Thm.~III.2.2.4 in \cite{Belavkin1986}). Lm.~\ref{lm: positivity vs complete positivity} asserts that $\sigma_\gamma$ is positive and thus a normal state on $\cN = \cN_1 \bar{\otimes} \cN_2$.
\end{proof}

\vspace{-0.2cm}

We finish with a few remarks. Note that the key steps in the argument leading to Thm.~\ref{thm: composite Gleason} apply to $C^*$-algebras. The crucial exception is lifting quasi-linearity to linearity. As such the restriction to von Neumann algebras might not be optimal. For instance, our result may be strengthened to $AW^*$-algebras, for which Gleason's theorem holds by \cite{Hamhalter2015}. 
%flag: what about existence of semi-finite weights?
%One could instead consider a presheaf formulation for which states correspond with completely positive maps on subalgebras, and use Thm.~\ref{thm: Belavkin} to prove the equivalence. This also paves the way for a further generalisation, where arbitrary CPTP maps are allowed, not only those corresponding to states?}\\

Comparing with $C^*$-algebras, we further remark that we restricted to normal states, since non-normal product states generally do not have a unique extension in the spatial tensor product.
%flag: by the additional weak closure
On the other hand, in the spatial tensor product of $C^*$-algebras every product state has a unique extension (Prop.~11.1.1 in \cite{KadisonRingroseII}). This suggests that for $AW^*$-algebras the bijective correspondence in Thm.~\ref{thm: composite Gleason} further extends beyond normal states.
%flag: In particular, Lm.~\ref{lm: global sections to Jordan homos} extends to complete additivity and normal Jordan $*$-homomorphisms. Hence, $\sigma_\gamma$ is normal whenever $\gamma$ is completely additive and vice versa.

Moreover, we observe that Thm.~\ref{thm: composite Gleason} can be formulated in an intrinsically Jordan-algebraic setting. Note that Bunce and Wright prove a generalisation of Gleason's theorem for JBW-algebras in \cite{BunceWright1985}. The case considered here is the one where JBW algebras arise as self-adjoint parts of von Neumann algebra, hence, are JW algebras. Alfsen and Shultz show that this is the case if and only if the JBW algebra admits a dynamical correspondence \cite{AlfsenShultz_StateSpaceOfOperatorAlgebras}. Thm.~\ref{thm: composite Gleason} applies for any choice of dynamical correspondences on the factor JW algebras.

In turn, in the context of general JBW algebras, the concept of state requires a weaker notion of positivity, reflecting the generalisation from completely positive to decomposable maps \cite{Stormer1982}. Clearly, the relation with (preservation of) dynamical correspondences in Def.~\ref{def: time-oriented global sections} is lost in this case. We point out that this is closely analogous to the limited applicability of Tomita-Takesaki theory in JBW algebras \cite{HaagerupHanche-Olsen1984}.

\section{Conclusion}

We proved a generalisation of Gleason's theorem to composite systems. An ad hoc attempt establishes a correspondence with normal linear functionals on the composite algebra that are positive on product operators, but fails to single out those that are positive (Thm.~\ref{thm: linearity vs positivity}). We remedy this by strengthening the additivity constraints on measures in commutative von Neumann subalgebras to dilated systems, together with a consistency condition between dynamical correspondences on the respective subalgebras (Thm.~\ref{thm: composite Gleason}). Neither of these conditions changes the result in the single system case (Thm.~\ref{thm: Gleason in context form II} and Thm.~\ref{thm: Gleason in context form III}), since positive linear functionals are completely positive as a consequence of Stinespring's theorem \cite{Stinespring1955}.

%insights: (i) the lack of positivity in Thm.~\ref{thm: linearity vs positivity} can be remedied by a strengthened coarse-graining condition for \emph{dilations} in product contexts, and (ii) a consistency condition between time orientations is necessary to restrict to states on the composite algebra.

%First, we compare Thm.~\ref{thm: composite Gleason} with the single system case. While quasi-linearity is readily lifted to composite systems by means of Thm.~\ref{thm: Bunce-Wright I} (see also \cite{KlayRandallFoulis1987,Wallach2002}), positivity poses a problem. Here, we have shown that by a natural extension of the additivity constraints in commutative von Neumann subalgebras to dilated systems by Naimark's theorem, together with a consistency condition between time orientations, one recovers the correspondence between measures and states on the composite algebra. Both of these properties become trivial in the single system case.

%Second, our formulation in terms of presheaves is natural to problem: the constraints both in Gleason's theorem and its extension to dilations is readily expressed as marginalisation constraints between contexts.

%flag: partly in physics paper?!
Apart from its mathematical value, our result also carries physical significance. Gleason's theorem (Thm.~\ref{thm: general Gleason}) plays a crucial role in the foundations of quantum theory, where it justifies Born's rule. It is all the more interesting that a similar result extends to composite systems. For instance, we note that within the framework of algebraic quantum field theory the observable algebra is naturally composed of local algebras \cite{HaagKastler1964}.
%We leave it for future study to apply our result towards interpreting factor algebras as representing local systems; and to connect it with the topos quantum framework.

Thm.~\ref{thm: composite Gleason} was foreshadowed for finite type I factors, and with a focus on
classifying quantum from non-signalling correlations in \cite{FrembsDoering2022a}. Its close connection with entanglement classification will be discussed elsewhere \cite{Frembs2022a}. For a broader perspective on the significance of contextuality in the foundations of quantum mechanics we refer to \cite{DoeringFrembs2019a}.\\

%other paper!
%Moreover, note that Thm.~\ref{thm: composite Gleason} provides an intrinsic characterisation of the tensor product in terms of the Cartesian product of contexts. We leave it for future research to study its potential in deriving the tensor product construction in quantum theory from more basic structures.\\

\paragraph*{Acknowledgements.} This work is supported through a studentship in the Centre for Doctoral Training on Controlled Quantum Dynamics at Imperial College funded by the EPSRC, by grant number FQXi-RFP-1807 from the Foundational Questions Institute and Fetzer Franklin Fund, a donor advised fund of Silicon Valley Community Foundation, and ARC Future Fellowship FT180100317.

\clearpage
\bibliographystyle{siam}
\bibliography{bibliography}

\begin{thebibliography}{10}

\bibitem{IshamDoeringI}
{\sc {A. D{\"o}ring, and C. J. Isham}}, {\em {A topos foundation for theories
  of physics: I. Formal languages for physics}}, J. Math. Phys., 49 (2008),
  p.~053515.

\bibitem{IshamDoeringII}
\leavevmode\vrule height 2pt depth -1.6pt width 23pt, {\em {A topos foundation
  for theories of physics: II. Daseinisation and the liberation of quantum
  theory}}, J. Math. Phys., 49 (2008), p.~053516.

\bibitem{IshamDoeringIII}
\leavevmode\vrule height 2pt depth -1.6pt width 23pt, {\em {A topos foundation
  for theories of physics: III. The representation of physical quantities with
  arrows $\breve{\delta}(\hat{A}):\underline{\Sigma} \rightarrow
  \underline{\mathbb{R}^\leftrightarrow}$}}, J. Math. Phys., 49 (2008),
  p.~053517.

\bibitem{IshamDoeringIV}
\leavevmode\vrule height 2pt depth -1.6pt width 23pt, {\em {A topos foundation
  for theories of physics: IV. Categories of systems}}, J. Math. Phys., 49
  (2008), p.~053518.

\bibitem{AlfsenShultz1998a}
{\sc E.~M. {Alfsen} and F.~W. {Shultz}}, {\em {On Orientation and Dynamics in
  Operator Algebras.Part I}}, Communications in Mathematical Physics, 194
  (1998), pp.~87--108.

\bibitem{AlfsenShultz1998}
{\sc E.~M. {Alfsen} and F.~W. {Shultz}}, {\em {Orientation in operator
  algebras}}, Proc. Natl. Acad. Sci. U.S.A., 95 (1998), pp.~6596--6601.

\bibitem{AlfsenShultz_StateSpaceOfOperatorAlgebras}
{\sc E.~M. Alfsen and F.~W. Shultz}, {\em State spaces of operator algebras:
  Basic theory, orientations, and {$C^*$}-products}, Birkh\"auser, Boston,
  2001.

\bibitem{Baez2020}
{\sc J.~Baez}, {\em {Getting to the bottom of Noether's theorem}}, arXiv:
  Mathematical Physics,  (2020).

\bibitem{Belavkin1986}
{\sc V.~P. {Belavkin} and P.~{Staszewski}}, {\em {A Radon-Nikodym theorem for
  completely positive maps}}, Reports on Mathematical Physics, 24 (1986),
  pp.~49--55.

\bibitem{Blackadar_OperatorAlgebras}
{\sc B.~Blackadar}, {\em {Operator Algebras: Theory of $C^*$-Algebras and Von
  Neumann Algebras}}, no.~v. 13 in Encyclopaedia of Mathematical Sciences,
  Springer, 2006.

\bibitem{BunceWright1985}
{\sc L.~J. {Bunce} and J.~D. {Maitland Wright}}, {\em {Quantum measures and
  states on Jordan algebras}}, Communications in Mathematical Physics, 98
  (1985), pp.~187--202.

\bibitem{IshamButterfieldII}
{\sc J.~Butterfield and C.~J. Isham}, {\em {A topos perspective on the
  Kochen-Specker theorem: II. Conceptual aspects and classical analogues}},
  Int. J. Theor. Phys., 38 (1999), pp.~827--859.

\bibitem{IshamButterfieldIV}
\leavevmode\vrule height 2pt depth -1.6pt width 23pt, {\em {Topos perspective
  on the Kochen-Specker theorem: IV. Interval valuations}}, Int. J. Theor.
  Phys., 41 (2002), pp.~613--639.

\bibitem{Choi1975}
{\sc M.-D. Choi}, {\em Completely positive linear maps on complex matrices},
  Linear Algebra Its Appl., 10 (1975), pp.~285 -- 290.

\bibitem{Christensen1982}
{\sc E.~Christensen}, {\em {Measures on projections and physical states}},
  Commun. Math. Phys., 86 (1982), pp.~529--538.

\bibitem{Connes1975}
{\sc A.~Connes}, {\em A factor not anti-isomorphic to itself}, Ann. Math., 101
  (1975), pp.~536--554.

\bibitem{DeGroote2007}
{\sc H.~F. de~Groote}, {\em Observables {IV}: {T}he presheaf perspective},
  2007.

\bibitem{Doering2004}
{\sc A.~{D{\"o}ring}}, {\em {Kochen Specker Theorem for von Neumann Algebras}},
  Int. J. Theor. Phys., 44 (2005), pp.~139--160.

\bibitem{Doering2012}
{\sc A.~D{\"o}ring}, {\em {Generalised Gelfand spectra of nonabelian unital
  $C^*$-algebras}}, ArXiv e-prints,  (2012).

\bibitem{Doering2014}
\leavevmode\vrule height 2pt depth -1.6pt width 23pt, {\em {Two new complete
  invariants of von Neumann algebras}}, ArXiv e-prints,  (2014).

\bibitem{DoeringFrembs2019a}
{\sc A.~D\"oring and M.~Frembs}, {\em Contextuality and the fundamental
  theorems of quantum mechanics}, ArXiv e-prints,  (2019).

\bibitem{DoeIsh11}
{\sc A.~D{\"o}ring and C.~J. Isham}, {\em ``{W}hat is a thing?'': Topos theory
  in the foundations of physics}, in New Structures for Physics, B.~Coecke,
  ed., no.~813 in Lecture Notes in Physics, Springer, 2011, pp.~753--937.

\bibitem{Frembs2022a}
{\sc M.~Frembs}, {\em Entanglement and the arrow of time}.
\newblock \textit{forthcoming}, (2022).

\bibitem{FrembsDoering2022a}
{\sc M.~Frembs and A.~D{\"o}ring}, {\em From no-signalling to quantum states},
  ArXiv e-prints,  (2022).

\bibitem{Gleason1975}
{\sc A.~Gleason}, {\em {Measures on the closed subspaces of a Hilbert space}},
  J. Math. Mech., 6 (1957), pp.~885--893.

\bibitem{HaagKastler1964}
{\sc R.~Haag and D.~Kastler}, {\em An algebraic approach to quantum field
  theory}, Journal of Mathematical Physics, 5 (1964), pp.~848--861.

\bibitem{Haagerup1975}
{\sc U.~Haagerup}, {\em {The standard form of von Neumann algebras}},
  Mathematica Scandinavica, 37 (1975), pp.~271--283.

\bibitem{HaagerupHanche-Olsen1984}
{\sc U.~Haagerup and H.~Hanche-Olsen}, {\em {Tomita-Takesaki theory for Jordan
  algebras}}, Journal of Operator Theory, 11 (1984), pp.~343--364.

\bibitem{Hamhalter2015}
{\sc J.~Hamhalter}, {\em {Dye's theorem and Gleason's theorem for
  $AW^*$-algebras}}, J. Math. Anal. Appl., 422 (2015), pp.~1103 -- 1115.

\bibitem{IshamButterfieldIII}
{\sc J.~Hamilton, C.~J. Isham, and J.~Butterfield}, {\em {Topos perspective on
  the Kochen-Specker theorem: III. Von Neumann algebras as the base category}},
  Int. J. Theor. Phys., 39 (2000), pp.~1413--1436.

\bibitem{HLS2009}
{\sc C.~Heunen, N.~P. Landsman, and B.~Spitters}, {\em A topos for algebraic
  quantum theory}, Commun. Math. Phys., 291 (2009), pp.~63--110.

\bibitem{HLS2010}
\leavevmode\vrule height 2pt depth -1.6pt width 23pt, {\em Bohrification of
  operator algebras and quantum logic}, Synthese, 186 (2012), pp.~719--752.

\bibitem{IshamButterfieldI}
{\sc C.~J. Isham and J.~Butterfield}, {\em {Topos perspective on the
  Kochen-Specker theorem: I. Quantum states as generalized valuations}}, Int.
  J. Theor. Phys., 37 (1998), pp.~2669--2733.

\bibitem{BunceWright1992}
{\sc L.~J.~Bunce and J.~D. Maitland~Wright}, {\em {The Mackey-Gleason
  problem}}, Bull. Amer. Math. Soc., 26 (1992).

\bibitem{BunceWright1993}
\leavevmode\vrule height 2pt depth -1.6pt width 23pt, {\em {On Dye's theorem
  for Jordan operator algebras}}, Expo. Math, 11 (1993), pp.~91--95.

\bibitem{KadisonRingroseII}
{\sc R.~V. Kadison and J.~R. Ringrose}, {\em Fundamentals of the theory of
  operator algebras / Vol.2, Advanced theory}, Pure and applied mathematics,
  Academic, New York ; London, 1986.

\bibitem{KlayRandallFoulis1987}
{\sc M.~{Kl{\"a}y} et~al.}, {\em {Tensor products and probability weights}},
  Int. J. Theor. Phys., 26 (1987), pp.~199--219.

\bibitem{KochenSpecker1967}
{\sc S.~Kochen and E.~P. Specker}, {\em The problem of hidden variables in
  quantum mechanics}, J. Math. Mech., 17 (1967), pp.~59--87.

\bibitem{Maeda1989}
{\sc S.~{Maeda}}, {\em {Probability measures on projections in von Neumann
  algebras}}, Rev. Math. Phys., 1 (1989), pp.~235--290.

\bibitem{McCrimmon_ATasteOfJordanAlgebras}
{\sc K.~Mccrimmon}, {\em A Taste of Jordan Algebras}, Springer, 2004.

\bibitem{Naimark1943}
{\sc M.~A. {Naimark}}, {\em {On a representation of additive operator set
  functions.}}, {C. R. (Dokl.) Acad. Sci. URSS, n. Ser.}, 41 (1943),
  pp.~359--361.

\bibitem{Stinespring1955}
{\sc W.~F. Stinespring}, {\em {Positive functions on $C^*$-algebras}}, Proc.
  Am. Math. Soc., 6 (1955), pp.~211--216.

\bibitem{Stormer1982}
{\sc E.~St{\o}rmer}, {\em {Decomposable positive maps on $C^*$-algebras}},
  Proc. Am. Math. Soc., 86 (1982), pp.~402--402.

\bibitem{Stormer2014}
\leavevmode\vrule height 2pt depth -1.6pt width 23pt, {\em {The analogue of
  Choi matrices for a class of linear maps on von Neumann algebras}},
  International Journal of Mathematics, 26 (2014).

\bibitem{Wallach2002}
{\sc N.~R. {Wallach}}, {\em {An unentangled Gleason's theorem}}, vol.~305,
  Amer. Math. Soc., 2002, p.~291–298.

\bibitem{Yeadon1983}
{\sc F.~J. Yeadon}, {\em {Measures on projections in $W^*$-algebras of type
  $\text{II}_1$}}, Bulletin of the London Mathematical Society, 15 (1983),
  pp.~139--145.

\bibitem{Yeadon1984}
\leavevmode\vrule height 2pt depth -1.6pt width 23pt, {\em {Finitely additive
  measures on projections in finite $W^*$-algebras}}, Bulletin of the London
  Mathematical Society, 16 (1984), pp.~145--150.

\end{thebibliography}
%\appendix
%\include{TranslatingDilations}

\end{document}